\newif\ifSHORTVERSION
\SHORTVERSIONfalse

\newif\ifZENODOVERSION
\ZENODOVERSIONfalse

\ifSHORTVERSION
 \documentclass[letterpaper,10pt,conference]{ieeeconf}
 \IEEEoverridecommandlockouts
\else
 \documentclass[journal,10pt,twoside]{IEEEtran}
\fi

\usepackage[T1]{fontenc}
\usepackage[utf8]{inputenc}

\ifSHORTVERSION
 \usepackage{amsmath,amssymb,amsfonts,mathtools}

 \usepackage{amsthm}
\else
 \usepackage{amsmath,amssymb,amsfonts,amsthm,mathtools}
\fi

\usepackage{newtxtext,newtxmath}
\usepackage{microtype}
\usepackage{bm}
\let\labelindent\relax
\usepackage{enumitem}
\usepackage{aliascnt}

\ifSHORTVERSION
\else
 \usepackage[a4paper,left=18mm,right=18mm,top=24mm,bottom=22mm,columnsep=6mm]{geometry}
\fi

\ifZENODOVERSION
    \usepackage{eso-pic}
    \newcommand{\ZenodoSideNotice}{%
      \AddToShipoutPictureFG*{%
        \AtPageLowerLeft{%
          \put(
            \LenToUnit{5mm},
            \LenToUnit{35mm}
          ){%
            \rotatebox{90}{%
              \href{https://doi.org/10.5281/zenodo.23301985}{%
                \color{gray}%
                \sffamily\huge
                Zenodo preprint, 11 October 2026.
                DOI: 10.5281/zenodo.23301985%
              }%
            }%
          }%
        }%
      }%
    }
\else
\fi

\newtheorem{definition}{Definition}
\newaliascnt{proposition}{definition}
\newtheorem{proposition}[proposition]{Proposition}
\aliascntresetthe{proposition}
\newaliascnt{lemma}{definition}
\newtheorem{lemma}[lemma]{Lemma}
\aliascntresetthe{lemma}
\newaliascnt{theorem}{definition}
\newtheorem{theorem}[theorem]{Theorem}
\aliascntresetthe{theorem}
\newaliascnt{corollary}{definition}
\newtheorem{corollary}[corollary]{Corollary}
\aliascntresetthe{corollary}
\newaliascnt{assumption}{definition}
\newtheorem{assumption}[assumption]{Assumption}
\aliascntresetthe{assumption}
\theoremstyle{remark}
\newaliascnt{remark}{definition}
\newtheorem{remark}[remark]{Remark}
\aliascntresetthe{remark}
\newaliascnt{example}{definition}

\aliascntresetthe{example}

\usepackage{cite}
\usepackage[hidelinks]{hyperref}
\usepackage[nameinlink,noabbrev]{cleveref}
\crefformat{equation}{#2(#1)#3}
\Crefformat{equation}{Equation~#2(#1)#3}
\crefname{figure}{Fig.}{Figs.}
\Crefname{figure}{Figure}{Figures}
\crefname{table}{Table}{Tables}
\Crefname{table}{Table}{Tables}
\crefname{section}{Sec.}{Secs.}
\Crefname{section}{Section}{Sections}
\crefname{subsection}{Sec.}{Secs.}
\Crefname{subsection}{Section}{Sections}
\crefname{definition}{Definition}{Definitions}
\Crefname{definition}{Definition}{Definitions}
\crefname{proposition}{Proposition}{Propositions}
\Crefname{proposition}{Proposition}{Propositions}
\crefname{lemma}{Lemma}{Lemmas}
\Crefname{lemma}{Lemma}{Lemmas}
\crefname{theorem}{Theorem}{Theorems}
\Crefname{theorem}{Theorem}{Theorems}
\crefname{corollary}{Corollary}{Corollaries}
\Crefname{corollary}{Corollary}{Corollaries}
\crefname{assumption}{Assumption}{Assumptions}
\Crefname{assumption}{Assumption}{Assumptions}
\crefname{remark}{Remark}{Remarks}
\Crefname{remark}{Remark}{Remarks}
\crefname{example}{Example}{Examples}
\Crefname{example}{Example}{Examples}

\ifSHORTVERSION
\else
 \setlist[enumerate]{leftmargin=*,itemsep=0.3ex,topsep=0.5ex,parsep=0pt}
 \setlist[itemize]{leftmargin=*,itemsep=0.3ex,topsep=0.5ex,parsep=0pt}
 \usepackage{titlesec}
 \renewcommand{\thesection}{\arabic{section}}
 \renewcommand{\thesubsection}{\thesection.\arabic{subsection}}
 \renewcommand{\thesubsubsection}{\thesubsection.\arabic{subsubsection}}
 \titleformat{\section}{\normalfont\large\bfseries}{\thesection}{0.65em}{}
 \titleformat{\subsection}{\normalfont\normalsize\bfseries}{\thesubsection}{0.65em}{}
 \titleformat{\subsubsection}{\normalfont\normalsize\itshape}{\thesubsubsection}{0.65em}{}
 \titlespacing*{\section}{0pt}{2.0ex plus 0.5ex minus 0.2ex}{0.8ex}
 \titlespacing*{\subsection}{0pt}{1.5ex plus 0.4ex minus 0.2ex}{0.5ex}
 \titlespacing*{\subsubsection}{0pt}{1.2ex plus 0.3ex minus 0.2ex}{0.4ex}
\fi

\newif\ifanonymous
\anonymousfalse

\newcommand{\papertitle}{\huge Aerodynamic Promptness of Tilting-Propeller Multirotors}
\newcommand{\PaperAuthorsAffiliated}{Antonio Franchi\textsuperscript{1,2}}
\newcommand{\AffiliationOne}{\textsuperscript{1}Robotics and Mechatronics, Faculty of Electrical Engineering, Mathematics and Computer Science, University of Twente, Enschede, The Netherlands.}
\newcommand{\AffiliationTwo}{\textsuperscript{2}Department of Computer, Control and Management Engineering, Sapienza University of Rome, Rome, Italy.}
\newcommand{\CorrespondingAuthor}{\textit{Corresponding author:} Antonio Franchi, \href{mailto:schol@r-franchi.eu}{schol@r-franchi.eu}.}
\newcommand{\FundingStatement}{This work was partially funded by the Horizon Europe research project AUTOASSESS under grant agreement No.~101120732.}
\newcommand{\PaperAbstract}{This work develops a geometric theory of aerodynamic promptness for multirotor platforms whose propeller directions depend on a smooth orientation-mechanism configuration. The actuator state consists of the actual signed propeller speeds and the mechanism configuration, and the nonlinear actuator map retains both the signed-quadratic propeller law and the configuration-dependent thrust directions. Constant direction maps represent fixed propellers, while shared dependence on mechanism coordinates represents mechanically coupled tilts. A Riemannian metric on the product actuator manifold compares propeller accelerations and orientation-mechanism rates through physically interpretable scales. Composing the actuator-map differential, the musical isomorphism induced by this metric, and the dual differential defines a task-rate authority map from wrench covectors to wrench-rate vectors. In fixed task coordinates, the square root of the determinant of its matrix defines aerodynamic promptness. Positive promptness is equivalent to local wrench-rate surjectivity, while the associated spectrum resolves directional authority and conditioning. The same construction yields the metric horizontal inverse, the orthogonal projector onto wrench-fiber directions, and task-preserving internal motions that regulate promptness without changing the instantaneous wrench. The construction is related intrinsically to dynamic feedback linearization and specialized to the independently tilting, radially arranged Holocopter and Voliro architectures, clarifying how their established allocation and rank constructions fit within the same actuator-manifold geometry.}
\newcommand{\PaperKeywords}{Aerodynamic promptness, control allocation, differential geometry, dynamic feedback linearization, multirotor vehicles, tilting propellers, thrust-vectoring mechanisms}

\title{\papertitle}
\ifSHORTVERSION
 \ifanonymous
  \author{Anonymous Author}
 \else
  \author{\PaperAuthorsAffiliated\thanks{\AffiliationOne}\thanks{\AffiliationTwo}\thanks{\FundingStatement}}
 \fi
\else
 \ifanonymous
  \author{Anonymous Author}
 \else
  \author{\PaperAuthorsAffiliated}
 \fi
\fi

\ifanonymous
 \hypersetup{pdftitle={\papertitle},pdfauthor={Anonymous Author}}
\else
 \hypersetup{pdftitle={\papertitle},pdfauthor={Antonio Franchi}}
\fi

\ifSHORTVERSION
 \makeatletter
 \@ifundefined{keywords}{\newenvironment{keywords}{\par\smallskip\noindent\textbf{Keywords:}\ }{\par\smallskip}}{}
 \makeatother
\fi

\ifSHORTVERSION
\else
 \makeatletter
 \newcommand{\MakeNeutralTitle}{%
  \twocolumn[{%
   \begin{@twocolumnfalse}
    \vspace*{-1.5em}
    {\raggedright\fontsize{17}{20}\selectfont\bfseries\papertitle\par}
    \vspace{0.9em}
    \ifanonymous
     {\raggedright\normalsize\textbf{Anonymous Author}\par}
    \else
     {\raggedright\normalsize\textbf{\PaperAuthorsAffiliated}\par}
     \vspace{0.35em}
     {\raggedright\small\AffiliationOne\par\AffiliationTwo\par\CorrespondingAuthor\par\FundingStatement\par}
    \fi
    \vspace{1.1em}
    \noindent{\small\bfseries Abstract}\par
    \vspace{0.25em}
    {\small\noindent\PaperAbstract\par}
    \vspace{0.8em}
    {\small\noindent\textbf{Keywords:} \PaperKeywords.\par}
    \vspace{1.6em}
   \end{@twocolumnfalse}
  }]%
 }
 \makeatother
\fi

\newcommand{\RealNumbers}{\mathbb{R}}
\newcommand{\NonnegativeReals}{\mathbb{R}_{\geq 0}}
\newcommand{\PositiveReals}{\mathbb{R}_{>0}}
\newcommand{\IntegerNumbers}{\mathbb{Z}}
\newcommand{\UnitSphere}{\mathbb{S}^{2}}
\newcommand{\ActuatorStateSpace}{\mathcal{V}}
\newcommand{\AdmissibleActuatorStateSpace}{\mathcal{V}_{\mathrm{ad}}}
\newcommand{\PropellerSpeedSpace}{\mathcal{S}}
\newcommand{\OrientationConfigurationSpace}{\mathcal{Q}}
\newcommand{\configurationDimension}{n_{q}}
\newcommand{\WrenchSpace}{\mathcal{W}}
\newcommand{\InputSpace}{\mathcal{U}}
\newcommand{\RegularActuatorSet}{\mathcal{V}_{\mathrm{reg}}}
\newcommand{\actuatorCount}{n}
\newcommand{\wrenchDimension}{m}
\newcommand{\fullWrenchDimension}{6}
\newcommand{\spatialDimension}{3}
\newcommand{\actuatorIndex}{i}
\newcommand{\actuatorIndexSet}{\mathcal{I}}
\newcommand{\ActuatorTangentSpaceAt}[1]{T_{#1}\ActuatorStateSpace}
\newcommand{\WrenchTangentSpaceAt}[1]{T_{#1}\WrenchSpace}
\newcommand{\ActuatorCotangentSpaceAt}[1]{T_{#1}^{*}\ActuatorStateSpace}
\newcommand{\WrenchCotangentSpaceAt}[1]{T_{#1}^{*}\WrenchSpace}
\newcommand{\identityMatrix}[1]{I_{#1}}
\newcommand{\transpose}{\mathsf{T}}
\newcommand{\matrixRank}{\operatorname{rank}}
\newcommand{\matrixDeterminant}{\operatorname{det}}
\newcommand{\kernel}{\operatorname{ker}}
\newcommand{\image}{\operatorname{im}}
\newcommand{\diagOperator}{\operatorname{diag}}
\newcommand{\absoluteValue}[1]{\left\lvert#1\right\rvert}
\newcommand{\vectorNorm}[1]{\left\lVert#1\right\rVert}
\newcommand{\setBuilder}[2]{\left\{#1\,\middle|\,#2\right\}}
\newcommand{\differential}{\mathrm{d}}
\newcommand{\directSum}{\oplus}
\newcommand{\orthogonalComplement}{\perp}
\newcommand{\argmaximum}{\operatorname*{arg\,max}}
\newcommand{\BodyFrame}{\mathcal{F}_{B}}
\newcommand{\PropellerFrame}[1]{\mathcal{F}_{P_{#1}}}
\newcommand{\BodyOrigin}{O_{B}}
\newcommand{\PropellerOrigin}[1]{O_{P_{#1}}}
\newcommand{\propellerPosition}[1]{\bm{r}_{#1}}
\newcommand{\tiltAxis}[1]{\bm{a}_{#1}}
\newcommand{\nominalThrustDirection}[1]{\bm{d}_{#1}^{0}}
\newcommand{\propellerDirectionMap}[1]{\bm{d}_{#1}}
\newcommand{\rotationAboutAxis}[2]{\mathcal{R}(#1,#2)}
\newcommand{\crossProductMatrix}[1]{[#1]_{\times}}
\newcommand{\bodyWrench}{\bm{w}}
\newcommand{\armLength}{L}
\newcommand{\canonicalAxis}[1]{\bm{e}_{#1}}
\newcommand{\actuatorState}{\bm{v}}
\newcommand{\propellerSpeedArray}{\bm{s}}
\newcommand{\propellerSpeed}[1]{s_{#1}}
\newcommand{\propellerAccelerationArray}{\dot{\bm{s}}}
\newcommand{\configurationState}{\bm{q}}
\newcommand{\configurationRate}{\dot{\bm{q}}}
\newcommand{\tiltAngleArray}{\bm{\alpha}}
\newcommand{\tiltAngle}[1]{\alpha_{#1}}
\newcommand{\actuatorRate}{\dot{\bm{v}}}
\newcommand{\thrustMagnitude}[1]{T_{#1}}
\newcommand{\dragMomentMagnitude}[1]{M_{#1}}
\newcommand{\thrustCoefficient}[1]{k_{f,#1}}
\newcommand{\dragCoefficient}[1]{k_{m,#1}}
\newcommand{\spinDirection}[1]{\epsilon_{#1}}
\newcommand{\signedSquare}[1]{#1\absoluteValue{#1}}
\newcommand{\SignedSquareMap}{\psi}
\newcommand{\SignedSquareMapDerivative}{\psi'}
\newcommand{\unitWrenchCurve}[1]{\bm{b}_{#1}}
\newcommand{\actuatorMap}{f}
\newcommand{\actuatorMapDifferentialAt}[1]{\differential\actuatorMap_{#1}}
\newcommand{\actuatorMapJacobian}{J_{f}}
\newcommand{\speedJacobian}{J_{s}}
\newcommand{\configurationJacobian}{J_{q}}
\newcommand{\taskRate}{\dot{\bm{w}}}
\newcommand{\metric}{g}
\newcommand{\metricMatrix}{G}
\newcommand{\speedMetricMatrix}{G_{s}}
\newcommand{\configurationMetricMatrix}{G_{q}}
\newcommand{\couplingMetricMatrix}{G_{sq}}
\newcommand{\speedCapabilityMatrix}{A_{s}}
\newcommand{\configurationCapabilityMatrix}{A_{q}}
\newcommand{\wrenchFiber}{\mathcal{F}}
\newcommand{\fiberAt}[1]{\wrenchFiber_{#1}}
\newcommand{\verticalSpace}{\mathsf{V}}
\newcommand{\verticalSpaceAt}[1]{\verticalSpace_{#1}}
\newcommand{\horizontalSpace}{\mathsf{H}}
\newcommand{\horizontalSpaceAt}[1]{\horizontalSpace_{#1}}
\newcommand{\musicalSharp}{\sharp}
\newcommand{\authorityMap}{\mathscr{D}}
\newcommand{\horizontalInverseMap}{j}
\newcommand{\horizontalInverseMapAt}[1]{\horizontalInverseMap_{#1}^{\metric}}
\newcommand{\horizontalInverseMatrix}{J_{f}^{\#_{\metric}}}
\newcommand{\verticalProjectorMap}{p_{\mathrm{ver}}}
\newcommand{\verticalProjectorMapAt}[1]{{\verticalProjectorMap}_{#1}}
\newcommand{\verticalProjectorMatrix}{P_{\mathrm{ver}}}
\newcommand{\verticalActuatorRate}{\actuatorRate_{\mathrm{ver}}}
\newcommand{\taskMetric}{h}
\newcommand{\taskMetricMatrix}{H}
\newcommand{\taskRateAuthorityForm}{\mathcal{D}}
\newcommand{\taskRateAuthorityMatrix}{D}
\newcommand{\speedAuthorityContribution}{D_{s}}
\newcommand{\configurationAuthorityContribution}{D_{q}}
\newcommand{\aerodynamicPromptness}{\rho}
\newcommand{\metricWeightedPromptness}{\rho_{h}}
\newcommand{\promptnessSquared}{\sigma}
\newcommand{\promptnessReference}{\rho_{\mathrm{d}}}
\newcommand{\riemannianGradient}{\operatorname{grad}_{\metric}}
\newcommand{\desiredTaskRate}{\dot{\bm{w}}_{\mathrm{d}}}
\newcommand{\internalMotion}{\bm{\xi}}
\newcommand{\promptnessGain}{k_{\rho}}
\newcommand{\actuatorVelocityInput}{\bm{u}}
\newcommand{\actuatorDrift}{\bm{d}}
\newcommand{\actuatorInputMap}{b}
\newcommand{\actuatorInputMapAt}[1]{\actuatorInputMap_{#1}}
\newcommand{\actuatorInputMatrix}{B}
\newcommand{\inputMetric}{r}
\newcommand{\inputMetricMatrix}{R}
\newcommand{\inputMusicalSharp}{\sharp_{\inputMetric}}
\newcommand{\inputActuatorAuthorityMap}{c}
\newcommand{\inputActuatorAuthorityMapAt}[1]{\inputActuatorAuthorityMap_{#1}}
\newcommand{\inputActuatorAuthorityMatrix}{C}
\newcommand{\decouplingMap}{\Delta}
\newcommand{\decouplingMapAt}[1]{\decouplingMap_{#1}}
\newcommand{\decouplingMatrix}{A_{\mathrm{d}}}
\newcommand{\dflAuthorityMap}{d_{\mathrm{u}}}
\newcommand{\dflAuthorityMapAt}[1]{{\dflAuthorityMap}_{#1}}
\newcommand{\dflAuthorityMatrix}{D_{\mathrm{d}}}
\newcommand{\dflPromptness}{\rho_{\mathrm{dfl}}}
\newcommand{\driftCompensatedTaskRate}{\bm{y}}
\newcommand{\dflHorizontalInverse}{j_{\mathrm{u}}}
\newcommand{\dflHorizontalInverseAt}[1]{{\dflHorizontalInverse}_{#1}}
\newcommand{\TiltTorus}{\mathbb{T}^{\actuatorCount}}
\newcommand{\tiltTransmissionMap}{\alpha}
\newcommand{\holocopterRotorCount}{4}
\newcommand{\voliroDecompositionMap}{\chi}
\newcommand{\voliroRotorCount}{6}
\newcommand{\radialAzimuth}[1]{\beta_{#1}}
\newcommand{\voliroVerticalForce}[1]{F_{\mathrm{v},#1}}
\newcommand{\voliroLateralForce}[1]{F_{\mathrm{l},#1}}
\newcommand{\voliroDecomposedForceArray}{\bm{F}_{\mathrm{dec}}}
\newcommand{\voliroStaticAllocationMap}{a_{\mathrm{static}}}
\newcommand{\voliroStaticAllocationMatrix}{A_{\mathrm{static}}}

\begin{document}
\ifZENODOVERSION
    \ZenodoSideNotice
\else
\fi

\ifSHORTVERSION
 \maketitle
 \begin{abstract}\PaperAbstract\end{abstract}
 \begin{keywords}\PaperKeywords\end{keywords}
\else
 \pagestyle{plain}
 \MakeNeutralTitle
\fi

\section{Introduction}
\label{sec:introduction}

Multirotor aerial robots combine mechanical simplicity, vertical takeoff and landing, and precise hovering, but conventional coplanar designs couple translational and rotational motion because their propellers generate force predominantly along one body-fixed direction~\cite{Mahony2012Multirotor}. This limitation becomes consequential in aerial manipulation, contact inspection, and operation in constrained environments, where the vehicle may need to control position, attitude, and interaction wrench simultaneously~\cite{Ruggiero2018AerialManipulation,Zhao2022Dragon}.

Fully actuated multirotors address this limitation by arranging fixed propellers in nonparallel directions or by actively vectoring their thrust. The resulting design space includes optimized fixed-tilt platforms~\cite{Park2016ODAR,Brescianini2018Omnidirectional}, variable-tilt quadrotors and hexarotors~\cite{Ryll2015Holocopter,Kamel2018Voliro}, articulated vectorable-thrust systems~\cite{Zhao2018Dragon}, and recent tiltable platforms integrating allocation, trajectory optimization, and physical interaction~\cite{Ruggia2025MOMAV,Li2026Effector}. The breadth of these architectures and their allocation properties is surveyed in~\cite{Rashad2020Review,Hamandi2021Taxonomy}.

Full actuation determines which instantaneous body wrenches can be generated. Redundant actuation introduces a different question: which actuator state should realize a wrench when many states produce the same output? Classical control allocation resolves this ambiguity through generalized inverses, constrained optimization, or preferred-control objectives~\cite{Durham2017Allocation}. In fully actuated aerial vehicles, these methods have been extended through null-space allocation, explicit actuator constraints, and actuator-aware prediction~\cite{Su2021Nullspace,Santos2022Allocation,Pretto2026Nullspace}. Their prevailing objectives concern effort, feasibility, smoothness, singularity avoidance, or actuator margins.

A complementary line of work quantifies how configuration shapes
differential task authority. Manipulability introduced volume and
conditioning measures for robot velocity transmission
in~\cite{Yoshikawa1985Manipulability}, while nonlinear control
identifies surjectivity of the relevant decoupling map as the
feasibility boundary for feedback linearization
in~\cite{Isidori1995}. Aerodynamic promptness was introduced
in~\cite{Franchi2026Promptness} as a measure of the wrench-rate
readiness of a redundant multirotor actuator state, together with a
geometric characterization of its fiberwise trade-off with energetic
expenditure. The drag-aware aerodynamic manipulability framework
in~\cite{Franchi2026AeroPromptness} subsequently incorporated motor
torque limits and aerodynamic drag into this readiness geometry.
The subsequent variable-aerodynamic-damping study identified the
complementary passive effect: motion along a constant-wrench fiber can
also shape incremental aerodynamic damping without changing the trim
wrench~\cite{Franchi2026VADA}. Building on the promptness, or
readiness, perspective, later work formulated a
control-barrier-function lower bound on motor-only wrench-rate
authority~\cite{Silano2026ReadinessBarrier} and used active rotor tilt
to preserve this certified authority under actuator
limitations~\cite{Silano2026TiltResource}.

This work develops the intrinsic differential geometry of aerodynamic promptness for multirotors with configuration-dependent propeller directions. The physical actuator state is modeled on the product of a propeller-speed domain and the configuration manifold of the propeller-orientation mechanism. Constant direction maps represent fixed propellers, while independent or shared configuration dependence represents independently or mechanically coupled tilting propellers. A Riemannian metric assigns a common normalized meaning to propeller accelerations and orientation-mechanism rates. The actuator-map
differential and the musical isomorphism induced by this metric then
define complete actuator-state wrench-rate authority independently of
actuator coordinates. This construction extends the fiberwise
promptness and readiness framework
of~\cite{Franchi2026Promptness,Franchi2026AeroPromptness} to
configuration-dependent propeller orientation and connects it to wrench-fiber
geometry, dynamic feedback linearization, and task-preserving internal
motion. The present work concerns active wrench-rate authority; the
passive damping mechanism of~\cite{Franchi2026VADA} is a complementary
fiber-dependent property rather than part of the promptness
definition.

The contributions are as follows.
\begin{enumerate}
\item We formulate the physical actuator-to-wrench map of a multirotor directly in actual signed propeller speeds and the configuration of a smooth propeller-orientation mechanism, thereby including fixed, independently tilting, and mechanically coupled propellers.
\item We define a smooth actuator-rate metric that compares propeller accelerations and orientation-mechanism rates through physically interpretable characteristic capabilities.
\item We construct the task-rate authority map intrinsically, define aerodynamic promptness from its coordinate volume, and separate the propeller-acceleration and orientation-mechanism contributions under a block-diagonal metric.
\item We prove that positive promptness is equivalent to local wrench-rate surjectivity and establish the corresponding minimum-norm horizontal inverse and orthogonal wrench-fiber projector.
\item We relate the construction to actuator dynamics and dynamic feedback linearization by distinguishing the intrinsic decoupling map, its coordinate matrix, and the input-induced authority map.
\item We derive task-preserving internal motions that regulate promptness along wrench fibers and specialize the resulting geometry to the independently tilting, radially arranged Holocopter and Voliro architectures.
\end{enumerate}

The remainder of this work is organized as follows. \Cref{sec:model} defines the platform geometry and nonlinear actuator map. \Cref{sec:metric} develops the actuator metric and wrench-fiber geometry. \Cref{sec:promptness} defines aerodynamic promptness and establishes its main properties. \Cref{sec:dfl} relates promptness to actuator dynamics and dynamic feedback linearization. \Cref{sec:shaping} develops task-preserving promptness shaping. \Cref{sec:radially_tilting_multirotors} specializes the theory to the Holocopter and Voliro, and \Cref{sec:conclusion} concludes this work.

\section{Multirotor Actuation Model}
\label{sec:model}

\subsection{Frames, geometry, and actuator state}

Consider a multirotor.
The inertial frame and the pose of the multirotor relative to it do
not enter the internal actuator-to-body-wrench map and are therefore
omitted from the theoretical development.

\begin{definition}[Platform frames]
Let \(\BodyFrame\) be a body-fixed frame whose origin is denoted by
\(\BodyOrigin\). Let \(\actuatorCount\) denote the number of
propellers and define
\(\actuatorIndexSet:=\{1,\ldots,\actuatorCount\}\). For each
\(\actuatorIndex\in\actuatorIndexSet\), let
\(\PropellerFrame{\actuatorIndex}\) be a frame attached to the stator
of the \(\actuatorIndex\)-th propeller unit. All wrench coordinates
used below are expressed in \(\BodyFrame\).
\end{definition}

\begin{definition}[Propeller geometry and orientation mechanism]
Let \(\OrientationConfigurationSpace\) be a smooth manifold of
dimension \(\configurationDimension\in\IntegerNumbers\), with
\(\configurationDimension\geq0\), describing the configuration of the
propeller-orientation mechanism. For each
\(\actuatorIndex\in\actuatorIndexSet\), let
\(\propellerPosition{\actuatorIndex}\in\RealNumbers^{\spatialDimension}\)
denote the position of \(\PropellerOrigin{\actuatorIndex}\) relative
to \(\BodyOrigin\), expressed in \(\BodyFrame\). Moreover, define the
smooth map
\[
\propellerDirectionMap{\actuatorIndex}:
\OrientationConfigurationSpace
\to
\UnitSphere,
\quad
\configurationState
\mapsto
\propellerDirectionMap{\actuatorIndex}(\configurationState),
\]
where
\(\propellerDirectionMap{\actuatorIndex}(\configurationState)\)
is the unit thrust direction of the
\(\actuatorIndex\)-th propeller at the mechanism configuration
\(\configurationState\), expressed in \(\BodyFrame\).

The \(\actuatorIndex\)-th propeller is fixed if
\(\propellerDirectionMap{\actuatorIndex}\) is constant. In general,
\(\propellerDirectionMap{\actuatorIndex}(\configurationState)\) may depend on any subset of the mechanism variables. Shared dependence of several propeller-direction maps on the same mechanism variables represents kinematic coupling among the corresponding propeller orientations.
\end{definition}

\begin{remark}[Independent one-axis tilting]
The independent one-axis tilting architecture is recovered with \(\OrientationConfigurationSpace=\TiltTorus\), \(\configurationDimension=\actuatorCount\), \(\configurationState=\tiltAngleArray\), and
\(
\propellerDirectionMap{\actuatorIndex}(\configurationState)
=
\rotationAboutAxis{\tiltAxis{\actuatorIndex}}{\tiltAngle{\actuatorIndex}}
\nominalThrustDirection{\actuatorIndex}
\).
More generally, a smooth transmission map \(\tiltTransmissionMap:\OrientationConfigurationSpace\to\TiltTorus\) may specify all physical propeller angles from fewer mechanism coordinates. Constant components describe fixed propellers, while repeated, signed, or nonlinear components describe shared motors, gearing, or linkages.
\end{remark}

\begin{assumption}[Ideal first-order propeller law]
\label{ass:propeller}
Each propeller produces thrust and aerodynamic reaction moment according to
\[
\thrustMagnitude{\actuatorIndex}=\thrustCoefficient{\actuatorIndex}\signedSquare{\propellerSpeed{\actuatorIndex}},\quad
\dragMomentMagnitude{\actuatorIndex}=\spinDirection{\actuatorIndex}\dragCoefficient{\actuatorIndex}\signedSquare{\propellerSpeed{\actuatorIndex}},
\]
where \(\propellerSpeed{\actuatorIndex}\in\RealNumbers\) is the actual signed propeller angular speed, \(\thrustCoefficient{\actuatorIndex}\in\PositiveReals\), \(\dragCoefficient{\actuatorIndex}\in\NonnegativeReals\), and \(\spinDirection{\actuatorIndex}\in\{-1,1\}\) specifies the aerodynamic reaction-moment orientation relative to the thrust.
\end{assumption}

\begin{remark}[Actual speeds are essential]
The choice of \(\propellerSpeed{\actuatorIndex}\), rather than \(\signedSquare{\propellerSpeed{\actuatorIndex}}\), as an actuator-state coordinate preserves the signed-quadratic nonlinearity and makes the available wrench-rate authority explicitly dependent on propeller speed. This dependence contributes to what aerodynamic promptness must measure.
\end{remark}

\begin{definition}[Actuator manifold]
Let \(\PropellerSpeedSpace\subseteq\RealNumbers^{\actuatorCount}\) be an open propeller-speed domain. The actuator manifold is
\[
\ActuatorStateSpace:=\PropellerSpeedSpace\times\OrientationConfigurationSpace
\ni
\actuatorState:=\bigl(\propellerSpeedArray,\configurationState\bigr),
\]
where \(\propellerSpeedArray\in\PropellerSpeedSpace\) and \(\configurationState\in\OrientationConfigurationSpace\). Hardware bounds define an admissible subset \(\AdmissibleActuatorStateSpace\subseteq\ActuatorStateSpace\).
\end{definition}

\subsection{Nonlinear actuator-to-wrench map}

\begin{definition}[Unit signed-squared-speed wrench map]
For each propeller, define the smooth map \(\unitWrenchCurve{\actuatorIndex}:\OrientationConfigurationSpace\to\RealNumbers^{\fullWrenchDimension}\) by
\begin{equation}
\unitWrenchCurve{\actuatorIndex}(\configurationState):=
\begin{bmatrix}
\thrustCoefficient{\actuatorIndex}\propellerDirectionMap{\actuatorIndex}(\configurationState)\\
\thrustCoefficient{\actuatorIndex}\crossProductMatrix{\propellerPosition{\actuatorIndex}}\propellerDirectionMap{\actuatorIndex}(\configurationState)
+\spinDirection{\actuatorIndex}\dragCoefficient{\actuatorIndex}\propellerDirectionMap{\actuatorIndex}(\configurationState)
\end{bmatrix}.
\label{eq:unit_wrench_curve}
\end{equation}
\end{definition}

\begin{definition}[Physical actuator map]
Let \(\WrenchSpace\subseteq\RealNumbers^{\wrenchDimension}\), with \(\wrenchDimension\leq\fullWrenchDimension\), be an open task-wrench space. The physical actuator map is the continuously differentiable map \(\actuatorMap:\ActuatorStateSpace\to\WrenchSpace\) defined, after selecting the relevant wrench coordinates, by
\begin{equation}
\bodyWrench=\actuatorMap(\propellerSpeedArray,\configurationState):=
\sum_{\actuatorIndex\in\actuatorIndexSet}
\unitWrenchCurve{\actuatorIndex}(\configurationState)\signedSquare{\propellerSpeed{\actuatorIndex}}.
\label{eq:physical_map}
\end{equation}
\end{definition}

\begin{remark}[Generality]
Equation~\eqref{eq:physical_map} covers arbitrary mixtures of fixed and tilting propellers, independent or mechanically coupled tilt mechanisms, heterogeneous propeller locations and coefficients, reversible and nonreversible operating subsets, and any propeller count. The platform-specific orientation mechanism is entirely described by \(\OrientationConfigurationSpace\) and the maps \(\propellerDirectionMap{\actuatorIndex}\).
\end{remark}

\begin{lemma}[Differential of the actuator map]
\label{lem:differential}
The physical actuator map is continuously differentiable. At every
\(\actuatorState\in\ActuatorStateSpace\), its differential is the
linear map
\[
\actuatorMapDifferentialAt{\actuatorState}:
\ActuatorTangentSpaceAt{\actuatorState}
\to
\WrenchTangentSpaceAt{\actuatorMap(\actuatorState)}.
\]
Its matrix in product actuator coordinates and the selected wrench
coordinates is
\begin{equation}
\actuatorMapJacobian(\actuatorState)
=
\begin{bmatrix}
\speedJacobian(\actuatorState)
&
\configurationJacobian(\actuatorState)
\end{bmatrix},
\label{eq:jacobian_partition}
\end{equation}
where the \(\actuatorIndex\)-th propeller-speed column is
\begin{equation}
[\speedJacobian(\actuatorState)]_{\actuatorIndex}
=
2\absoluteValue{\propellerSpeed{\actuatorIndex}}
\unitWrenchCurve{\actuatorIndex}(\configurationState),
\label{eq:speed_jacobian_columns}
\end{equation}
and the configuration block \(\configurationJacobian(\actuatorState)\) represents the linear map
\begin{equation}
\configurationRate
\longmapsto
\sum_{\actuatorIndex\in\actuatorIndexSet}
\signedSquare{\propellerSpeed{\actuatorIndex}}
{\differential\unitWrenchCurve{\actuatorIndex}}_{\configurationState}
(\configurationRate).
\label{eq:configuration_jacobian}
\end{equation}
Consequently, every actuator-state velocity satisfies
\begin{equation}
\taskRate
=
\speedJacobian(\actuatorState)\propellerAccelerationArray
+
\configurationJacobian(\actuatorState)\configurationRate.
\label{eq:wrench_rate}
\end{equation}
\end{lemma}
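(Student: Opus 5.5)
The plan is to reduce everything to the scalar signed-square map \(\SignedSquareMap:\RealNumbers\to\RealNumbers\), \(\SignedSquareMap(x)=\signedSquare{x}\). Once this map is understood, the rest follows from the product and chain rules applied to \eqref{eq:physical_map}.

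\textbf{Step 1: the signed square.} I will show that \(\SignedSquareMap\) is \(C^{1}\) with \(\SignedSquareMapDerivative(x)=2\absoluteValue{x}\). For \(x\neq0\), \(\SignedSquareMap\) coincides locally with \(\pm x^{2}\), so its derivative is \(\pm2x=2\absoluteValue{x}\). At \(x=0\), the difference quotient is \(\SignedSquareMap(h)/h=\absoluteValue{h}\), which tends to \(0\). Hence \(\SignedSquareMapDerivative(0)=0=2\absoluteValue{0}\). The map \(x\mapsto2\absoluteValue{x}\) is continuous, so \(\SignedSquareMap\in C^{1}\). It is not \(C^{2}\) at the origin, which is why the statement only claims continuous differentiability. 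I expect this origin case to be the only genuinely delicate point, since the rest is routine smooth calculus.

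\textbf{Step 2: regularity of \(\actuatorMap\).} Each summand in \eqref{eq:physical_map} is the product of two functions. The first is \((\propellerSpeedArray,\configurationState)\mapsto\unitWrenchCurve{\actuatorIndex}(\configurationState)\), which is smooth because \(\propellerDirectionMap{\actuatorIndex}\) is smooth and \eqref{eq:unit_wrench_curve} is linear in \(\propellerDirectionMap{\actuatorIndex}\). The second is \((\propellerSpeedArray,\configurationState)\mapsto\SignedSquareMap(\propellerSpeed{\actuatorIndex})\), which is \(C^{1}\) by Step 1. Selecting wrench coordinates is a linear projection. Therefore \(\actuatorMap\) is \(C^{1}\) on the product manifold \(\ActuatorStateSpace\), and \(\actuatorMapDifferentialAt{\actuatorState}\) is a well-defined linear map \(\ActuatorTangentSpaceAt{\actuatorState}\to\WrenchTangentSpaceAt{\actuatorMap(\actuatorState)}\). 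Here \(\WrenchSpace\) is open in \(\RealNumbers^{\wrenchDimension}\), so its tangent spaces are identified with \(\RealNumbers^{\wrenchDimension}\).

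\textbf{Step 3: block structure.} I will use the splitting \(\ActuatorTangentSpaceAt{\actuatorState}\cong T_{\propellerSpeedArray}\PropellerSpeedSpace\directSum T_{\configurationState}\OrientationConfigurationSpace\). The differential is then the sum of the two partial differentials.

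For the speed block, differentiate the \(\actuatorIndex\)-th summand with respect to \(\propellerSpeed{\actuatorIndex}\) at fixed \(\configurationState\). By Step 1 this gives \(2\absoluteValue{\propellerSpeed{\actuatorIndex}}\unitWrenchCurve{\actuatorIndex}(\configurationState)\). The other summands do not depend on \(\propellerSpeed{\actuatorIndex}\), so this is exactly \eqref{eq:speed_jacobian_columns}.

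For the configuration block, fix \(\propellerSpeedArray\). The scalars \(\signedSquare{\propellerSpeed{\actuatorIndex}}\) are then constants, and linearity of the differential gives \eqref{eq:configuration_jacobian}. Choosing a chart on \(\OrientationConfigurationSpace\) turns this linear map into the matrix \(\configurationJacobian(\actuatorState)\), which yields \eqref{eq:jacobian_partition}.

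\textbf{Step 4: wrench rate.} For any \(C^{1}\) curve \(t\mapsto\actuatorState(t)=(\propellerSpeedArray(t),\configurationState(t))\), the chain rule gives \(\taskRate=\actuatorMapDifferentialAt{\actuatorState}(\actuatorRate)\). Inserting the block partition from Step 3 yields \eqref{eq:wrench_rate}.
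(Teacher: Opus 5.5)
Your proposal is correct and follows the paper's own proof. Both arguments rest on the signed square \(s\mapsto s\lvert s\rvert\) being continuously differentiable with derivative \(2\lvert s\rvert\), then differentiate \eqref{eq:physical_map} separately in the speeds and in the mechanism configuration, and read \eqref{eq:wrench_rate} as the coordinate form of the differential. You additionally supply the details the paper leaves implicit: the difference-quotient check at the origin and the product-rule argument for global \(C^{1}\) regularity.
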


\begin{proof}
The signed-quadratic map
\(\SignedSquareMap(s)=s\absoluteValue{s}\) is continuously
differentiable, with
\(\SignedSquareMapDerivative(s)=2\absoluteValue{s}\).
Differentiating~\eqref{eq:physical_map} with respect to the propeller
speeds and the mechanism configuration yields
\eqref{eq:speed_jacobian_columns}
and~\eqref{eq:configuration_jacobian}, respectively.
Equation~\eqref{eq:wrench_rate} is the corresponding
product-coordinate representation of the differential.
\end{proof}

\begin{remark}[Fixed, independent, and coupled orientations]
If \(\propellerDirectionMap{\actuatorIndex}\) is constant, the
\(\actuatorIndex\)-th propeller contributes no term to the
configuration differential. If several propeller-direction maps
depend on the same mechanism coordinates, their contributions are
combined by
\(\configurationJacobian(\actuatorState)\) along the corresponding
mechanism-velocity directions. Hence, the same configuration
Jacobian represents fixed propellers, independently tilting
propellers, and mechanically coupled orientation mechanisms without
introducing individual propeller-angle coordinates.
\end{remark}

\begin{remark}[Different speed scalings]
The contribution of the \(\actuatorIndex\)-th propeller to
\(\speedJacobian(\actuatorState)\) scales linearly with
\(\absoluteValue{\propellerSpeed{\actuatorIndex}}\), whereas its
contribution to \(\configurationJacobian(\actuatorState)\) is weighted
by \(\signedSquare{\propellerSpeed{\actuatorIndex}}\). Consequently,
a propeller at zero speed provides neither propeller-acceleration nor
orientation-mechanism wrench-rate authority to first order. As the
propeller-speed magnitude increases, the wrench-rate effect of changing
its orientation grows quadratically, while that of changing its speed
grows linearly. Hence, the relative contribution of the orientation
mechanism may increase with propeller speed, although its actual
importance also depends on the available propeller accelerations,
mechanism rates, and the actuator-rate metric introduced below.
\end{remark}

\subsection{Local actuation and redundancy}

The following terminology is relative to the selected wrench task
\(\actuatorMap:\ActuatorStateSpace\to\WrenchSpace\) and to an actuator
state \(\actuatorState\in\ActuatorStateSpace\).

\begin{definition}[Local full actuation and underactuation]
The actuator system is locally fully actuated at
\(\actuatorState\) if
\[
\actuatorMapDifferentialAt{\actuatorState}:
\ActuatorTangentSpaceAt{\actuatorState}
\to
\WrenchTangentSpaceAt{\actuatorMap(\actuatorState)}
\]
is surjective. It is locally underactuated at
\(\actuatorState\) otherwise.
\end{definition}

Local full actuation means that every wrench-rate vector at the
current wrench can be generated by an actuator-state velocity. By the
local submersion theorem, the image under \(\actuatorMap\) of every
sufficiently small neighborhood of a locally fully actuated state
contains a neighborhood of the current wrench. The locally attainable
wrench set is therefore full-dimensional in \(\WrenchSpace\). This
statement applies to interior actuator states; hardware bounds may
restrict the admissible actuator-state velocities at boundary points.

\begin{definition}[Local redundancy]
The actuator system is locally redundant relative to the selected
wrench task at \(\actuatorState\) if
\[
\kernel
\actuatorMapDifferentialAt{\actuatorState}
\neq
\{0\}.
\]
Equivalently, there exists a nonzero actuator-state velocity that
produces zero instantaneous wrench rate.
\end{definition}

Local full actuation and local redundancy are independent properties.
At a locally fully actuated state,
\[
\dim
\kernel
\actuatorMapDifferentialAt{\actuatorState}
=
\dim\ActuatorStateSpace-\dim\WrenchSpace.
\]
The kernel directions are tangent to the current wrench fiber and
constitute the task-preserving internal directions studied below.

\section{Metric and Fiber Geometry}
\label{sec:metric}

\subsection{Actuator-rate metric}

\begin{definition}[Actuator Riemannian metric]
A Riemannian metric on \(\ActuatorStateSpace\) is a family of symmetric positive-definite bilinear forms \(\metric_{\actuatorState}:\ActuatorTangentSpaceAt{\actuatorState}\times\ActuatorTangentSpaceAt{\actuatorState}\to\RealNumbers\) depending smoothly on \(\actuatorState\). In local product coordinates induced by \((\propellerSpeedArray,\configurationState)\), the metric is represented by the smooth matrix-valued map
\[
\metricMatrix(\actuatorState):=
\begin{bmatrix}
\speedMetricMatrix(\actuatorState)&\couplingMetricMatrix(\actuatorState)\\
\couplingMetricMatrix(\actuatorState)^{\transpose}&\configurationMetricMatrix(\actuatorState)
\end{bmatrix}>0,
\]
where the block dimensions are \(\actuatorCount\) and \(\configurationDimension\). Accordingly, for \(\actuatorRate=(\propellerAccelerationArray,\configurationRate)\),
\[
\vectorNorm{\actuatorRate}_{\metric}^{2}:=
\actuatorRate^{\transpose}\metricMatrix(\actuatorState)\actuatorRate.
\]
\end{definition}

\begin{definition}[Independent ellipsoidal actuator-rate metric]
\label{def:diagonal_metric}
Let
\(\speedCapabilityMatrix:
\ActuatorStateSpace
\to
\RealNumbers^{\actuatorCount\times\actuatorCount}\)
and
\(\configurationCapabilityMatrix:
\ActuatorStateSpace
\to
\RealNumbers^{\configurationDimension\times\configurationDimension}\)
be smooth diagonal matrix-valued maps with strictly positive diagonal
entries. Their entries represent characteristic propeller
accelerations and orientation-mechanism rates, respectively. The
independent ellipsoidal actuator-rate metric is defined by
\[
\vectorNorm{\actuatorRate}_{\metric}^{2}
=
\vectorNorm{
\speedCapabilityMatrix(\actuatorState)^{-1}
\propellerAccelerationArray
}^{2}
+
\vectorNorm{
\configurationCapabilityMatrix(\actuatorState)^{-1}
\configurationRate
}^{2}.
\]
Its matrix in the selected product coordinates is
\begin{equation}
\metricMatrix(\actuatorState)
=
\diagOperator\!\left(
\speedCapabilityMatrix(\actuatorState)^{-\transpose}
\speedCapabilityMatrix(\actuatorState)^{-1},
\configurationCapabilityMatrix(\actuatorState)^{-\transpose}
\configurationCapabilityMatrix(\actuatorState)^{-1}
\right).
\label{eq:capability_metric}
\end{equation}
Because the capability matrices are diagonal with positive entries,
they are symmetric and invertible, and \eqref{eq:capability_metric}
reduces to
\[
\metricMatrix(\actuatorState)
=
\diagOperator\!\left(
\speedCapabilityMatrix(\actuatorState)^{-2},
\configurationCapabilityMatrix(\actuatorState)^{-2}
\right),
\]
which is symmetric positive definite.
\end{definition}

\begin{remark}[Role and limitations of the capability metric]
Aerodynamic promptness can be defined using any smooth Riemannian metric on \(\ActuatorStateSpace\); it does not require the particular metric in \Cref{def:diagonal_metric}. The independent ellipsoidal metric provides a physically interpretable choice whose unit ball approximates the available propeller accelerations and orientation-mechanism rates. Its block-diagonal structure expresses the absence of metric coupling between these two types of actuator motion and yields the additive decomposition of task-rate authority derived below. General smooth positive-definite metrics, including metrics with such coupling, remain admissible.

The coordinate representation applies on any mechanism chart and is global whenever the selected mechanism coordinates are global, as in the toroidal specializations considered later. A shared orientation mechanism contributes one mechanism-rate coordinate, independently of the number of propellers whose directions it changes, and its capability is therefore counted once. State dependence may represent reduced propeller acceleration near voltage or speed limits, configuration-dependent transmission ratios, load-dependent mechanism rates, or conservative thermal derating. Hard, asymmetric, or nonsmooth rate constraints require a more general state-dependent admissible-rate set.
\end{remark}

\subsection{Wrench fibers and horizontal directions}

\begin{definition}[Wrench fiber]
For \(\bodyWrench\in\WrenchSpace\), the actuator fiber is
\[
\fiberAt{\bodyWrench}
:=
\actuatorMap^{-1}(\bodyWrench)
=
\setBuilder{
\actuatorState\in\ActuatorStateSpace
}{
\actuatorMap(\actuatorState)=\bodyWrench
}.
\]
The feasible fiber is
\(\fiberAt{\bodyWrench}\cap\AdmissibleActuatorStateSpace\).
\end{definition}

\begin{definition}[Regular actuator state]
The regular actuator set is
\[
\RegularActuatorSet
:=
\setBuilder{
\actuatorState\in\ActuatorStateSpace
}{
\actuatorMapDifferentialAt{\actuatorState}:
\ActuatorTangentSpaceAt{\actuatorState}
\to
\WrenchTangentSpaceAt{\actuatorMap(\actuatorState)}
\text{ is surjective}
}.
\]
Equivalently, \(\actuatorState\in\RegularActuatorSet\) if and only if
\[
\matrixRank
\actuatorMapJacobian(\actuatorState)
=
\wrenchDimension.
\]
\end{definition}

\begin{proposition}[Local structure of a regular fiber]
\label{prop:regular_fiber}
If \(\actuatorState\in\RegularActuatorSet\), then
\(\fiberAt{\actuatorMap(\actuatorState)}\) is locally an embedded submanifold of class \(C^{1}\) of \(\ActuatorStateSpace\) of dimension
\(\actuatorCount+\configurationDimension-\wrenchDimension\), and
\[
T_{\actuatorState}
\fiberAt{\actuatorMap(\actuatorState)}
=
\kernel
\actuatorMapDifferentialAt{\actuatorState}.
\]
\end{proposition}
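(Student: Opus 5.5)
The plan is to apply the regular value (local submersion) theorem in its \(C^{1}\) form. By \Cref{lem:differential}, the map \(\actuatorMap\) is continuously differentiable. The signed-square map \(\SignedSquareMap\) is only \(C^{1}\), since \(\SignedSquareMapDerivative(s)=2\absoluteValue{s}\) is not differentiable at \(s=0\). This is why the fiber can only be claimed to be of class \(C^{1}\), and it is the one regularity point to keep track of. Since \(\actuatorState\in\RegularActuatorSet\), the differential \(\actuatorMapDifferentialAt{\actuatorState}\) is surjective. Its Jacobian \(\actuatorMapJacobian(\actuatorState)\) has rank \(\wrenchDimension\), and \(\dim\ActuatorStateSpace=\actuatorCount+\configurationDimension\).

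First, I will reduce to coordinates. I choose a chart of \(\OrientationConfigurationSpace\) around \(\configurationState\); together with the global coordinates \(\propellerSpeedArray\) on the open set \(\PropellerSpeedSpace\subseteq\RealNumbers^{\actuatorCount}\), this gives a chart \(\varphi\) of \(\ActuatorStateSpace\) around \(\actuatorState\). The coordinate expression \(F:=\actuatorMap\circ\varphi^{-1}\) is then a \(C^{1}\) map between open subsets of \(\RealNumbers^{\actuatorCount+\configurationDimension}\) and \(\RealNumbers^{\wrenchDimension}\). Its Jacobian at \(x_{0}:=\varphi(\actuatorState)\) is \(\actuatorMapJacobian(\actuatorState)\) and has full row rank \(\wrenchDimension\).

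Second, I will straighten the map. I permute coordinates so that some \(\wrenchDimension\times\wrenchDimension\) block of \(\actuatorMapJacobian(\actuatorState)\) is invertible, and write \(x=(y,z)\) with \(y\in\RealNumbers^{\wrenchDimension}\) and \(\partial_{y}F(x_{0})\) invertible. The \(C^{1}\) implicit function theorem, applied to \(F(y,z)-\bodyWrench=0\) with \(\bodyWrench:=\actuatorMap(\actuatorState)\), yields neighborhoods \(U\ni x_{0}\) and \(Z\ni z_{0}\) and a \(C^{1}\) map \(\gamma:Z\to\RealNumbers^{\wrenchDimension}\) such that \(F^{-1}(\bodyWrench)\cap U=\{(\gamma(z),z):z\in Z\}\). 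A graph of a \(C^{1}\) map is an embedded \(C^{1}\) submanifold of dimension \(\actuatorCount+\configurationDimension-\wrenchDimension\). Pulling it back through the smooth chart \(\varphi\) gives the local embedded structure of \(\fiberAt{\actuatorMap(\actuatorState)}\). Alternatively, the \(C^{1}\) inverse function theorem applied to \(x\mapsto(F(x)-\bodyWrench,z)\) gives a \(C^{1}\) slice chart directly.

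Third, I will identify the tangent space. For any \(C^{1}\) curve \(c\) in the fiber with \(c(0)=\actuatorState\), the composition \(\actuatorMap\circ c\) is constant. Differentiating gives \(\actuatorMapDifferentialAt{\actuatorState}(\dot c(0))=0\), so \(T_{\actuatorState}\fiberAt{\actuatorMap(\actuatorState)}\subseteq\kernel\actuatorMapDifferentialAt{\actuatorState}\). By rank–nullity and surjectivity, \(\dim\kernel\actuatorMapDifferentialAt{\actuatorState}=\actuatorCount+\configurationDimension-\wrenchDimension\), which equals the fiber dimension, so the inclusion is an equality.

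I expect no real obstacle. The only delicate step is justifying the \(C^{1}\) versions of the implicit and inverse function theorems, which hold verbatim for \(C^{1}\) maps, and noting that smoothness cannot be claimed when the fiber passes through states with some \(\propellerSpeed{\actuatorIndex}=0\).
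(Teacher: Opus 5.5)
Your proposal is correct and takes the same route as the paper, whose proof simply invokes the regular-level-set theorem for the \(C^{1}\) map \(\actuatorMap\). You unpack that theorem into its implicit-function-theorem proof in a product chart, and add the standard curve-plus-rank--nullity identification \(T_{\actuatorState}\fiberAt{\actuatorMap(\actuatorState)}=\kernel\actuatorMapDifferentialAt{\actuatorState}\); your version also uses only surjectivity at the single point \(\actuatorState\) rather than requiring \(\actuatorMap(\actuatorState)\) to be a regular value, which is exactly what the local statement needs.
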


\begin{proof}
The result follows from the regular-level-set theorem applied to
\(\actuatorMap:\ActuatorStateSpace\to\WrenchSpace\).
\end{proof}

\begin{definition}[Vertical and horizontal spaces]
At \(\actuatorState\in\RegularActuatorSet\), the vertical space is
\[
\verticalSpaceAt{\actuatorState}
:=
\kernel
\actuatorMapDifferentialAt{\actuatorState}
=
T_{\actuatorState}
\fiberAt{\actuatorMap(\actuatorState)}.
\]
The metric-horizontal space is its
\(\metric_{\actuatorState}\)-orthogonal complement,
\[
\horizontalSpaceAt{\actuatorState}
:=
\verticalSpaceAt{\actuatorState}^{\orthogonalComplement,\metric}.
\]
Consequently,
\[
\ActuatorTangentSpaceAt{\actuatorState}
=
\verticalSpaceAt{\actuatorState}
\directSum
\horizontalSpaceAt{\actuatorState}.
\]
\end{definition}

\begin{definition}[Task-rate authority map]
\label{def:authority_map}
Let \(\actuatorState\in\ActuatorStateSpace\). The dual of the
actuator-map differential is the linear map
\[
\actuatorMapDifferentialAt{\actuatorState}^{*}:
\WrenchCotangentSpaceAt{\actuatorMap(\actuatorState)}
\to
\ActuatorCotangentSpaceAt{\actuatorState}.
\]
Let
\[
\musicalSharp_{\metric,\actuatorState}:
\ActuatorCotangentSpaceAt{\actuatorState}
\to
\ActuatorTangentSpaceAt{\actuatorState}
\]
be the musical isomorphism induced by
\(\metric_{\actuatorState}\).

The task-rate authority map at \(\actuatorState\) is
\[
\authorityMap_{\actuatorState}
:=
\actuatorMapDifferentialAt{\actuatorState}
\circ
\musicalSharp_{\metric,\actuatorState}
\circ
\actuatorMapDifferentialAt{\actuatorState}^{*},
\]
with
\[
\authorityMap_{\actuatorState}:
\WrenchCotangentSpaceAt{\actuatorMap(\actuatorState)}
\to
\WrenchTangentSpaceAt{\actuatorMap(\actuatorState)}.
\]
\end{definition}

\begin{proposition}[Metric horizontal inverse]
\label{prop:horizontal_inverse}
For every \(\actuatorState\in\RegularActuatorSet\), the authority map
\(\authorityMap_{\actuatorState}\) is an isomorphism. Moreover, the
restriction
\[
\left.
\actuatorMapDifferentialAt{\actuatorState}
\right|_{\horizontalSpaceAt{\actuatorState}}
:
\horizontalSpaceAt{\actuatorState}
\to
\WrenchTangentSpaceAt{\actuatorMap(\actuatorState)}
\]
is an isomorphism.

Its inverse is the linear map
\[
\horizontalInverseMapAt{\actuatorState}
:
\WrenchTangentSpaceAt{\actuatorMap(\actuatorState)}
\to
\horizontalSpaceAt{\actuatorState}
\]
defined by
\begin{equation}
\horizontalInverseMapAt{\actuatorState}
:=
\musicalSharp_{\metric,\actuatorState}
\circ
\actuatorMapDifferentialAt{\actuatorState}^{*}
\circ
\authorityMap_{\actuatorState}^{-1}.
\label{eq:intrinsic_horizontal_inverse}
\end{equation}

For every
\(\taskRate\in
\WrenchTangentSpaceAt{\actuatorMap(\actuatorState)}\),
the actuator-state velocity
\(\horizontalInverseMapAt{\actuatorState}(\taskRate)\) is the unique
element of \(\horizontalSpaceAt{\actuatorState}\) mapped to
\(\taskRate\) by
\(\actuatorMapDifferentialAt{\actuatorState}\). It is also the unique
minimum-\(\metric_{\actuatorState}\)-norm element among all
actuator-state velocities mapped to \(\taskRate\).

In the product actuator coordinates and the selected wrench
coordinates, the matrices representing
\(\authorityMap_{\actuatorState}\) and
\(\horizontalInverseMapAt{\actuatorState}\) are
\begin{equation}
\taskRateAuthorityMatrix(\actuatorState)
=
\actuatorMapJacobian(\actuatorState)
\metricMatrix(\actuatorState)^{-1}
\actuatorMapJacobian(\actuatorState)^{\transpose},
\label{eq:authority_matrix}
\end{equation}
and
\[
\horizontalInverseMatrix(\actuatorState)
=
\metricMatrix(\actuatorState)^{-1}
\actuatorMapJacobian(\actuatorState)^{\transpose}
\taskRateAuthorityMatrix(\actuatorState)^{-1},
\]
respectively.
\end{proposition}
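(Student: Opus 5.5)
The plan is to treat everything as finite-dimensional linear algebra at a fixed \(\actuatorState\in\RegularActuatorSet\). Abbreviate \(L:=\actuatorMapDifferentialAt{\actuatorState}\) and \(\sharp:=\musicalSharp_{\metric,\actuatorState}\). The starting point is the quadratic form of the authority map: for a wrench covector \(\lambda\),
\[
\langle\lambda,\authorityMap_{\actuatorState}\lambda\rangle
=\langle L^{*}\lambda,\sharp L^{*}\lambda\rangle
=\vectorNorm{\sharp L^{*}\lambda}_{\metric}^{2}.
\]
This uses only the definition of the dual map and of the musical isomorphism, namely \(\metric_{\actuatorState}(\sharp\eta,X)=\eta(X)\). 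Surjectivity of \(L\), which is exactly regularity, makes \(L^{*}\) injective. Since \(\sharp\) is an isomorphism, \(\lambda\neq0\) then gives a strictly positive value. So \(\authorityMap_{\actuatorState}\) is a positive-definite symmetric map between spaces of equal dimension \(\wrenchDimension\), hence injective and therefore an isomorphism.

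Next I will identify the horizontal space intrinsically as \(\horizontalSpaceAt{\actuatorState}=\image(\sharp\circ L^{*})\).

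\begin{itemize}
\item \emph{Inclusion.} For \(X\in\kernel L\) we have \(\metric_{\actuatorState}(\sharp L^{*}\lambda,X)=\lambda(LX)=0\), so the image lies in \(\verticalSpaceAt{\actuatorState}^{\orthogonalComplement,\metric}\).
\item \emph{Dimension count.} The image has dimension \(\wrenchDimension\) by injectivity of \(L^{*}\). By \Cref{prop:regular_fiber}, \(\dim\verticalSpaceAt{\actuatorState}=\actuatorCount+\configurationDimension-\wrenchDimension\), so the orthogonal complement also has dimension \(\wrenchDimension\). The inclusion is therefore an equality.
\end{itemize}

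With this description, \(\horizontalInverseMapAt{\actuatorState}\) as defined in \eqref{eq:intrinsic_horizontal_inverse} takes values in \(\horizontalSpaceAt{\actuatorState}\). It also satisfies \(L\circ\horizontalInverseMapAt{\actuatorState}=\authorityMap_{\actuatorState}\circ\authorityMap_{\actuatorState}^{-1}=\mathrm{id}\). The restriction \(L|_{\horizontalSpaceAt{\actuatorState}}\) is injective because \(\horizontalSpaceAt{\actuatorState}\cap\verticalSpaceAt{\actuatorState}=\{0\}\). It maps between spaces of equal dimension, so it is an isomorphism, and the right inverse \(\horizontalInverseMapAt{\actuatorState}\) is its inverse. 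This also gives uniqueness of the horizontal preimage.

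For minimum norm, take any \(X\) with \(LX=\taskRate\) and write \(X=\horizontalInverseMapAt{\actuatorState}(\taskRate)+(X-\horizontalInverseMapAt{\actuatorState}(\taskRate))\). The second summand lies in \(\verticalSpaceAt{\actuatorState}\), so it is \(\metric\)-orthogonal to the first, and the Pythagorean identity gives \(\vectorNorm{X}_{\metric}^{2}=\vectorNorm{\horizontalInverseMapAt{\actuatorState}(\taskRate)}_{\metric}^{2}+\vectorNorm{X-\horizontalInverseMapAt{\actuatorState}(\taskRate)}_{\metric}^{2}\). Equality holds only when the vertical part vanishes, which proves uniqueness of the minimizer.

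Finally, for the coordinate matrices:

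\begin{itemize}
\item In product actuator coordinates and the selected wrench coordinates, \(L\) is represented by \(\actuatorMapJacobian(\actuatorState)\) (\Cref{lem:differential}).
\item Using dual bases, \(L^{*}\) is represented by \(\actuatorMapJacobian(\actuatorState)^{\transpose}\).
\item \(\sharp\) is represented by \(\metricMatrix(\actuatorState)^{-1}\), since the flat map is \(X\mapsto\metricMatrix X\).
\end{itemize}

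Composing these gives \eqref{eq:authority_matrix} and the stated formula for \(\horizontalInverseMatrix(\actuatorState)\).

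The only delicate step is the identification \(\horizontalSpaceAt{\actuatorState}=\image(\sharp\circ L^{*})\). It relies on the dimension count, which uses regularity twice: once through the kernel dimension and once through injectivity of \(L^{*}\). I expect this to be the main point requiring care; everything else is direct verification.
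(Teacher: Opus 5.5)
Your proposal is correct and follows essentially the same route as the paper's proof. Both establish the isomorphism via positivity of \(\eta(\authorityMap_{\actuatorState}\eta)\) from injectivity of \(\actuatorMapDifferentialAt{\actuatorState}^{*}\), identify \(\horizontalSpaceAt{\actuatorState}\) with the image of \(\musicalSharp_{\metric,\actuatorState}\circ\actuatorMapDifferentialAt{\actuatorState}^{*}\) by inclusion plus a dimension count, use the right-inverse identity \(\actuatorMapDifferentialAt{\actuatorState}\circ\horizontalInverseMapAt{\actuatorState}=\mathrm{id}\), and prove minimality by a Pythagorean splitting into horizontal and vertical parts. Your explicit injectivity argument for the restriction (via \(\horizontalSpaceAt{\actuatorState}\cap\verticalSpaceAt{\actuatorState}=\{0\}\)) and your spelled-out matrix representations of the dual and the musical isomorphism only make explicit steps the paper leaves implicit.
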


\begin{proof}
Because
\(\actuatorMapDifferentialAt{\actuatorState}\) is surjective, its dual
\(\actuatorMapDifferentialAt{\actuatorState}^{*}\) is injective. Thus,
for every nonzero
\(\eta\in
\WrenchCotangentSpaceAt{\actuatorMap(\actuatorState)}\),
\[
\eta\bigl(\authorityMap_{\actuatorState}(\eta)\bigr)
=
\metric_{\actuatorState}^{-1}\!\left(
\actuatorMapDifferentialAt{\actuatorState}^{*}\eta,
\actuatorMapDifferentialAt{\actuatorState}^{*}\eta
\right)
>0.
\]
Hence, \(\authorityMap_{\actuatorState}\) is injective. Its domain and
codomain have the same finite dimension, so it is an isomorphism.

The image of
\(\musicalSharp_{\metric,\actuatorState}
\circ
\actuatorMapDifferentialAt{\actuatorState}^{*}\)
equals \(\horizontalSpaceAt{\actuatorState}\). Indeed, for every
\(\eta\in
\WrenchCotangentSpaceAt{\actuatorMap(\actuatorState)}\) and every
\(z\in\verticalSpaceAt{\actuatorState}\),
\[
\metric_{\actuatorState}\!\left(
\musicalSharp_{\metric,\actuatorState}
\actuatorMapDifferentialAt{\actuatorState}^{*}\eta,
z
\right)
=
\eta\!\left(
\actuatorMapDifferentialAt{\actuatorState}z
\right)
=
0.
\]
The image is therefore contained in the horizontal space. The two
spaces have dimension \(\wrenchDimension\), so they coincide.

It follows directly from
\Cref{eq:intrinsic_horizontal_inverse} that
\[
\actuatorMapDifferentialAt{\actuatorState}
\circ
\horizontalInverseMapAt{\actuatorState}
=
\operatorname{id}_{
\WrenchTangentSpaceAt{\actuatorMap(\actuatorState)}
}.
\]
Therefore,
\(\horizontalInverseMapAt{\actuatorState}\) is the inverse of the
restriction of
\(\actuatorMapDifferentialAt{\actuatorState}\) to the horizontal
space.

Any
\(\actuatorRate\in
\ActuatorTangentSpaceAt{\actuatorState}\) satisfying
\(\actuatorMapDifferentialAt{\actuatorState}
\actuatorRate=\taskRate\)
can be written uniquely as
\[
\actuatorRate
=
\horizontalInverseMapAt{\actuatorState}(\taskRate)
+
\verticalActuatorRate,
\quad
\verticalActuatorRate\in
\verticalSpaceAt{\actuatorState}.
\]
Orthogonality gives
\[
\vectorNorm{\actuatorRate}_{\metric}^{2}
=
\vectorNorm{
\horizontalInverseMapAt{\actuatorState}(\taskRate)
}_{\metric}^{2}
+
\vectorNorm{\verticalActuatorRate}_{\metric}^{2}.
\]
The norm is therefore minimized uniquely when
\(\verticalActuatorRate=0\). The coordinate expressions follow from
the matrices representing the differential, its dual, and the musical
isomorphism.
\end{proof}

\begin{proposition}[Metric-orthogonal vertical projector]
\label{prop:vertical_projector}
For every \(\actuatorState\in\RegularActuatorSet\), define
\[
\verticalProjectorMapAt{\actuatorState}
:
\ActuatorTangentSpaceAt{\actuatorState}
\to
\ActuatorTangentSpaceAt{\actuatorState}
\]
by
\[
\verticalProjectorMapAt{\actuatorState}
:=
\operatorname{id}_{\ActuatorTangentSpaceAt{\actuatorState}}
-
\horizontalInverseMapAt{\actuatorState}
\circ
\actuatorMapDifferentialAt{\actuatorState}.
\]
Then
\(\verticalProjectorMapAt{\actuatorState}\) is the
\(\metric_{\actuatorState}\)-orthogonal projector onto
\(\verticalSpaceAt{\actuatorState}\). Its kernel is
\(\horizontalSpaceAt{\actuatorState}\).

In the product actuator coordinates, its matrix is
\[
\verticalProjectorMatrix(\actuatorState)
=
\identityMatrix{\actuatorCount+\configurationDimension}
-
\horizontalInverseMatrix(\actuatorState)
\actuatorMapJacobian(\actuatorState).
\]
\end{proposition}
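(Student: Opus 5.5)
The plan is to derive everything from \Cref{prop:horizontal_inverse}, in particular from the identity \(\actuatorMapDifferentialAt{\actuatorState}\circ\horizontalInverseMapAt{\actuatorState}=\operatorname{id}\), the fact that \(\horizontalInverseMapAt{\actuatorState}\) takes values in \(\horizontalSpaceAt{\actuatorState}\), and the direct-sum decomposition \(\ActuatorTangentSpaceAt{\actuatorState}=\verticalSpaceAt{\actuatorState}\directSum\horizontalSpaceAt{\actuatorState}\). Throughout, abbreviate \(P:=\verticalProjectorMapAt{\actuatorState}\) and \(j:=\horizontalInverseMapAt{\actuatorState}\).

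First, I will show that \(P\) is idempotent and has image contained in \(\verticalSpaceAt{\actuatorState}\). For any \(\actuatorRate\), the identity above gives
\(\actuatorMapDifferentialAt{\actuatorState}P\actuatorRate=\actuatorMapDifferentialAt{\actuatorState}\actuatorRate-\actuatorMapDifferentialAt{\actuatorState}\actuatorRate=0\), so \(\image P\subseteq\verticalSpaceAt{\actuatorState}\). Conversely, if \(z\in\verticalSpaceAt{\actuatorState}\), then \(\actuatorMapDifferentialAt{\actuatorState}z=0\), hence \(Pz=z\). Therefore \(P\) restricts to the identity on \(\verticalSpaceAt{\actuatorState}\), its image is exactly \(\verticalSpaceAt{\actuatorState}\), and \(P\circ P=P\).

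Second, I will identify the kernel. If \(\actuatorRate\in\horizontalSpaceAt{\actuatorState}\), then \(j(\actuatorMapDifferentialAt{\actuatorState}\actuatorRate)\) is the unique horizontal vector with the same image under the differential, by \Cref{prop:horizontal_inverse}. It therefore equals \(\actuatorRate\), so \(P\actuatorRate=0\). Conversely, \(P\actuatorRate=0\) gives \(\actuatorRate=j(\actuatorMapDifferentialAt{\actuatorState}\actuatorRate)\in\horizontalSpaceAt{\actuatorState}\). Hence \(\kernel P=\horizontalSpaceAt{\actuatorState}\).

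Third, a projector whose image is \(\verticalSpaceAt{\actuatorState}\) and whose kernel is the \(\metric_{\actuatorState}\)-orthogonal complement \(\horizontalSpaceAt{\actuatorState}\) is, by definition, the \(\metric_{\actuatorState}\)-orthogonal projector onto \(\verticalSpaceAt{\actuatorState}\). Explicitly, every \(\actuatorRate\) decomposes as \(\actuatorRate=P\actuatorRate+(\actuatorRate-P\actuatorRate)\), where the first term is vertical and the second term \(j\actuatorMapDifferentialAt{\actuatorState}\actuatorRate\) is horizontal. This decomposition coincides with the orthogonal splitting. If desired, one can also check \(\metric\)-self-adjointness: \(G P\) is symmetric, since \(G\horizontalInverseMatrix J_f=J_f^{\transpose}D^{-1}J_f\) is symmetric.

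Finally, the matrix expression follows by substituting the coordinate matrices \(\actuatorMapJacobian\) and \(\horizontalInverseMatrix\) from \Cref{lem:differential} and \Cref{prop:horizontal_inverse}. No step is a genuine obstacle. The only point requiring care is the kernel characterization, which relies on the uniqueness clause of \Cref{prop:horizontal_inverse} rather than on a direct computation.
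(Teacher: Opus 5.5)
Your proposal is correct and follows essentially the same route as the paper. The paper likewise shows that the map fixes $\verticalSpaceAt{\actuatorState}$ and annihilates $\horizontalSpaceAt{\actuatorState}$ via the inverse property of \Cref{prop:horizontal_inverse}, then concludes from the $\metric_{\actuatorState}$-orthogonal direct sum; your explicit image and kernel equalities and the optional symmetry check of $\metricMatrix(\actuatorState)\verticalProjectorMatrix(\actuatorState)$ are harmless refinements of that argument.
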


\begin{proof}
If \(z\in\verticalSpaceAt{\actuatorState}\), then
\(\actuatorMapDifferentialAt{\actuatorState}z=0\), and therefore
\(\verticalProjectorMapAt{\actuatorState}(z)=z\).

If \(z\in\horizontalSpaceAt{\actuatorState}\), the inverse property
in \Cref{prop:horizontal_inverse} gives
\[
\horizontalInverseMapAt{\actuatorState}
\bigl(
\actuatorMapDifferentialAt{\actuatorState}z
\bigr)
=
z,
\]
and therefore
\(\verticalProjectorMapAt{\actuatorState}(z)=0\).
The vertical-horizontal direct sum then proves that
\(\verticalProjectorMapAt{\actuatorState}\) is the projector onto the
vertical space along the horizontal space. Because these spaces are
\(\metric_{\actuatorState}\)-orthogonal, the projector is
metric-orthogonal. The coordinate expression follows by composition
of the corresponding matrices.
\end{proof}

\section{Aerodynamic Promptness}
\label{sec:promptness}

\begin{definition}[Task-rate authority bilinear form]
For each \(\actuatorState\in\ActuatorStateSpace\), the task-rate authority bilinear form is the map
\[
\taskRateAuthorityForm_{\actuatorState}:
\WrenchCotangentSpaceAt{\actuatorMap(\actuatorState)}
\times
\WrenchCotangentSpaceAt{\actuatorMap(\actuatorState)}
\to
\RealNumbers
\]
defined by
\[
\taskRateAuthorityForm_{\actuatorState}(\eta,\zeta)
:=
\eta\!\left(\authorityMap_{\actuatorState}(\zeta)\right)
=
\metric_{\actuatorState}^{-1}\!\left(
\actuatorMapDifferentialAt{\actuatorState}^{*}\eta,
\actuatorMapDifferentialAt{\actuatorState}^{*}\zeta
\right).
\]
Its coordinate matrix is \(\taskRateAuthorityMatrix(\actuatorState)\), defined in~\eqref{eq:authority_matrix}. The form is positive semidefinite at every actuator state and positive definite exactly on \(\RegularActuatorSet\).
\end{definition}

\begin{definition}[Aerodynamic promptness]
Let \(\taskMetric_{\bodyWrench}:\WrenchTangentSpaceAt{\bodyWrench}\times\WrenchTangentSpaceAt{\bodyWrench}\to\RealNumbers\) be a Riemannian metric on \(\WrenchSpace\), locally represented by \(\taskMetricMatrix(\bodyWrench)\). The metric-aware aerodynamic promptness is the map \(\metricWeightedPromptness:\ActuatorStateSpace\to\NonnegativeReals\) defined by
\[
\metricWeightedPromptness(\actuatorState):=
\sqrt{\matrixDeterminant\bigl(\taskRateAuthorityMatrix(\actuatorState)\taskMetricMatrix(\actuatorMap(\actuatorState))\bigr)}.
\]
In fixed task coordinates with the Euclidean task metric, the corresponding map \(\aerodynamicPromptness:\ActuatorStateSpace\to\NonnegativeReals\) is
\begin{equation}
\aerodynamicPromptness(\actuatorState):=
\sqrt{\matrixDeterminant\taskRateAuthorityMatrix(\actuatorState)}.
\label{eq:promptness}
\end{equation}
\end{definition}

This definition extends the generic fiber-density construction of~\cite{Franchi2026Promptness} to the product actuator manifold of configuration-dependent tilting-propeller multirotors. It measures the metric-weighted volume of wrench rates generated by the unit actuator-rate ball.

\begin{remark}[Promptness and readiness]
Aerodynamic promptness~\cite{Franchi2026Promptness} quantifies the wrench-rate readiness of an
actuator state. Accordingly, this work uses \emph{readiness} for the
physical capability and \emph{aerodynamic promptness} for its
metric-dependent volume measure. Subsequent work adopted this
readiness terminology to formulate a forward-invariant lower bound on
motor-only wrench-rate authority through a control barrier
function~\cite{Silano2026ReadinessBarrier}, and later extended that
construction to articulated multirotors by using active rotor tilt as
an authority-recovery resource~\cite{Silano2026TiltResource}.
\end{remark}

\begin{theorem}[Promptness and local wrench-rate surjectivity]
\label{thm:promptness_rank}
For every actuator state and every positive-definite actuator metric,
\[
\aerodynamicPromptness(\actuatorState)>0
\quad\Longleftrightarrow\quad
\matrixRank\actuatorMapJacobian(\actuatorState)=\wrenchDimension.
\]
\end{theorem}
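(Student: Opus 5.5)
The plan is to reduce the statement to a rank identity for the Gram-type matrix \(\taskRateAuthorityMatrix(\actuatorState)=\actuatorMapJacobian(\actuatorState)\metricMatrix(\actuatorState)^{-1}\actuatorMapJacobian(\actuatorState)^{\transpose}\) from \eqref{eq:authority_matrix}. Since \(\metricMatrix(\actuatorState)\) is symmetric positive definite, so is \(\metricMatrix(\actuatorState)^{-1}\). It therefore admits a factorization \(\metricMatrix(\actuatorState)^{-1}=L L^{\transpose}\) with \(L\) invertible, for instance its symmetric square root or a Cholesky factor. Setting \(M:=\actuatorMapJacobian(\actuatorState)L\), I will write \(\taskRateAuthorityMatrix(\actuatorState)=MM^{\transpose}\). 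This matrix is symmetric positive semidefinite, so \(\matrixDeterminant\taskRateAuthorityMatrix(\actuatorState)\geq0\) and the square root in \eqref{eq:promptness} is well defined.

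The key step is the equality \(\kernel \taskRateAuthorityMatrix(\actuatorState)=\kernel M^{\transpose}=\kernel \actuatorMapJacobian(\actuatorState)^{\transpose}\). One inclusion is trivial. For the other, take \(\eta\) with \(MM^{\transpose}\eta=0\). Then \(0=\eta^{\transpose}MM^{\transpose}\eta=\vectorNorm{M^{\transpose}\eta}^{2}\), so \(M^{\transpose}\eta=L^{\transpose}\actuatorMapJacobian(\actuatorState)^{\transpose}\eta=0\). Invertibility of \(L\) then gives \(\actuatorMapJacobian(\actuatorState)^{\transpose}\eta=0\). Hence \(\matrixRank\taskRateAuthorityMatrix(\actuatorState)=\matrixRank\actuatorMapJacobian(\actuatorState)^{\transpose}=\matrixRank\actuatorMapJacobian(\actuatorState)\).

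Both directions follow from this. Because \(\taskRateAuthorityMatrix(\actuatorState)\) is an \(\wrenchDimension\times\wrenchDimension\) positive semidefinite matrix, its determinant is the product of its nonnegative eigenvalues. The determinant is therefore positive exactly when the matrix is nonsingular, that is, when its rank is \(\wrenchDimension\). By the rank identity, this holds exactly when \(\matrixRank\actuatorMapJacobian(\actuatorState)=\wrenchDimension\). Equivalently, this is the statement that \(\actuatorState\in\RegularActuatorSet\).

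Alternatively, the direction \(\Leftarrow\) follows at once from \Cref{prop:horizontal_inverse}, since on \(\RegularActuatorSet\) the map \(\authorityMap_{\actuatorState}\) is an isomorphism and so its matrix has nonzero determinant. Only the direction \(\Rightarrow\) would still need the kernel argument.

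There is no real obstacle in this argument. The only care needed is to remember that the result is independent of coordinates and of the metric. Under a change of actuator or wrench coordinates, \(\taskRateAuthorityMatrix\) transforms by congruence with invertible matrices, which preserves the sign of the determinant and the rank. The equivalence therefore holds for every positive-definite \(\metricMatrix\), as the statement claims.
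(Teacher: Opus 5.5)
Your proof is correct and takes the same route as the paper. The paper also argues that positive definiteness of \(\metricMatrix(\actuatorState)^{-1}\) gives \(\taskRateAuthorityMatrix(\actuatorState)\) and \(\actuatorMapJacobian(\actuatorState)\) equal rank, and that the symmetric \(\wrenchDimension\times\wrenchDimension\) matrix \(\taskRateAuthorityMatrix(\actuatorState)\) has positive determinant exactly when its rank is \(\wrenchDimension\); your factorization \(\metricMatrix(\actuatorState)^{-1}=LL^{\transpose}\) and kernel argument supply the detail behind the paper's one-line rank claim.
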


\begin{proof}
Because \(\metricMatrix(\actuatorState)^{-1}\) is positive definite, \(\taskRateAuthorityMatrix(\actuatorState)\) and \(\actuatorMapJacobian(\actuatorState)\) have equal rank. The symmetric matrix \(\taskRateAuthorityMatrix(\actuatorState)\) is positive definite exactly when its rank equals \(\wrenchDimension\), which is equivalent to a strictly positive determinant.
\end{proof}

\begin{corollary}[Binary and quantitative authority]
The zero set of \(\aerodynamicPromptness\) is exactly the critical set of the nonlinear actuator map. Away from that set, the magnitude of \(\aerodynamicPromptness\) continuously quantifies the metric-normalized wrench-rate volume and therefore refines the binary rank condition.
\end{corollary}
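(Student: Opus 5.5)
The corollary is essentially a restatement of \Cref{thm:promptness_rank} combined with a continuity argument, so the plan is to split it into its two claims and treat each separately.

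\textbf{The zero set.} By definition, the critical set of \(\actuatorMap\) is the complement of \(\RegularActuatorSet\), namely the set of actuator states where \(\actuatorMapDifferentialAt{\actuatorState}\) fails to be surjective. Equivalently, it is where \(\matrixRank\actuatorMapJacobian(\actuatorState)<\wrenchDimension\). Since \(\taskRateAuthorityMatrix(\actuatorState)\) is symmetric positive semidefinite, its determinant is nonnegative, so \(\aerodynamicPromptness\) is well defined with values in \(\NonnegativeReals\). Therefore \(\aerodynamicPromptness(\actuatorState)=0\) is exactly the negation of \(\aerodynamicPromptness(\actuatorState)>0\). By \Cref{thm:promptness_rank}, this is equivalent to \(\matrixRank\actuatorMapJacobian(\actuatorState)\neq\wrenchDimension\), i.e.\ to \(\actuatorState\) being critical.

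\textbf{Continuity.} \Cref{lem:differential} shows that \(\actuatorMapJacobian\) is continuous, since \(\actuatorMap\) is \(C^{1}\). The metric matrix \(\metricMatrix\) is smooth and pointwise positive definite, so \(\metricMatrix^{-1}\) is continuous. It follows that \(\taskRateAuthorityMatrix=\actuatorMapJacobian\metricMatrix^{-1}\actuatorMapJacobian^{\transpose}\) is continuous. The determinant is a polynomial in the entries, and the square root is continuous on \(\NonnegativeReals\), so \(\aerodynamicPromptness\) is continuous on all of \(\ActuatorStateSpace\).

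\textbf{Volume interpretation.} The unit ball \(\{\actuatorRate:\vectorNorm{\actuatorRate}_{\metric}\le1\}\) is mapped by \(\actuatorMapDifferentialAt{\actuatorState}\) onto the ellipsoid \(\{\taskRate:\taskRate^{\transpose}\taskRateAuthorityMatrix(\actuatorState)^{-1}\taskRate\le1\}\). To see this, write \(\actuatorRate=\metricMatrix^{-1/2}\bm{z}\) with \(\vectorNorm{\bm{z}}\le1\). The image of the Euclidean unit ball under the surjective matrix \(M:=\actuatorMapJacobian\metricMatrix^{-1/2}\) is the ellipsoid with shape matrix \(MM^{\transpose}=\taskRateAuthorityMatrix\). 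The minimum-norm preimage is given by \(\horizontalInverseMatrix\) in \Cref{prop:horizontal_inverse}, and it has squared norm \(\taskRate^{\transpose}\taskRateAuthorityMatrix^{-1}\taskRate\). The Euclidean volume of this ellipsoid is the unit-ball volume times \(\sqrt{\matrixDeterminant\taskRateAuthorityMatrix}=\aerodynamicPromptness\). This identifies \(\aerodynamicPromptness\) as the metric-normalized wrench-rate volume. Combined with continuity, it shows that \(\aerodynamicPromptness\) refines the binary rank test: it vanishes exactly on the critical set and takes continuously varying positive values elsewhere.

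The main step requiring care is the ellipsoid-image identification, in particular confirming that the image is the full ellipsoid, not just contained in it. This follows from surjectivity and the minimum-norm characterization. Everything else follows directly from \Cref{thm:promptness_rank}.
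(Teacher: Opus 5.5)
Your proposal is correct and follows the paper's route. The paper gives no separate proof: it reads the zero-set claim directly off \Cref{thm:promptness_rank}, using that the determinant of the positive-semidefinite matrix \(\taskRateAuthorityMatrix(\actuatorState)\) is nonnegative, and it simply asserts the volume interpretation in the sentence following the definition of promptness. Your continuity argument and your ellipsoid identification via the minimum-norm inverse correctly supply the details the paper leaves implicit: the image of the unit \(\metric\)-ball is exactly \(\{\taskRate:\taskRate^{\transpose}\taskRateAuthorityMatrix(\actuatorState)^{-1}\taskRate\le1\}\), whose volume is proportional to \(\aerodynamicPromptness(\actuatorState)\).
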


\begin{proposition}[Propeller-acceleration and orientation-mechanism contributions]
\label{prop:additive_authority}
Under \Cref{def:diagonal_metric},
\begin{equation}
\taskRateAuthorityMatrix(\actuatorState)=\speedAuthorityContribution(\actuatorState)+\configurationAuthorityContribution(\actuatorState),
\label{eq:authority_sum}
\end{equation}
where
\begin{equation}
\begin{aligned}
\speedAuthorityContribution(\actuatorState)&:=\speedJacobian(\actuatorState)\speedCapabilityMatrix(\actuatorState)^{2}\speedJacobian(\actuatorState)^{\transpose},\\
\configurationAuthorityContribution(\actuatorState)&:=\configurationJacobian(\actuatorState)\configurationCapabilityMatrix(\actuatorState)^{2}\configurationJacobian(\actuatorState)^{\transpose}.
\label{eq:authority_components}
\end{aligned}
\end{equation}
Both contributions are positive semidefinite.
\end{proposition}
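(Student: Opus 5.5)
The plan is a direct block-matrix computation starting from the coordinate expression \eqref{eq:authority_matrix} of \Cref{prop:horizontal_inverse}, namely \(\taskRateAuthorityMatrix=\actuatorMapJacobian\metricMatrix^{-1}\actuatorMapJacobian^{\transpose}\). That identity came from composing the matrices of the differential, its dual, and the musical isomorphism, and it holds at every actuator state, not only on \(\RegularActuatorSet\). So we may use it for an arbitrary \(\actuatorState\in\ActuatorStateSpace\).

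\textbf{Step 1 (invert the metric).} Under \Cref{def:diagonal_metric}, the metric matrix is \(\metricMatrix(\actuatorState)=\diagOperator(\speedCapabilityMatrix(\actuatorState)^{-2},\configurationCapabilityMatrix(\actuatorState)^{-2})\). It is block diagonal with invertible diagonal blocks, so its inverse is
\[
\metricMatrix(\actuatorState)^{-1}=\diagOperator\bigl(\speedCapabilityMatrix(\actuatorState)^{2},\configurationCapabilityMatrix(\actuatorState)^{2}\bigr).
\]

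\textbf{Step 2 (expand the product).} Insert the partition \eqref{eq:jacobian_partition}, \(\actuatorMapJacobian=[\speedJacobian\ \configurationJacobian]\), and multiply out:
\[
\actuatorMapJacobian\metricMatrix^{-1}\actuatorMapJacobian^{\transpose}=\speedJacobian\speedCapabilityMatrix^{2}\speedJacobian^{\transpose}+\configurationJacobian\configurationCapabilityMatrix^{2}\configurationJacobian^{\transpose}.
\]
The mixed terms vanish because the off-diagonal blocks of \(\metricMatrix^{-1}\) are zero. This is exactly \eqref{eq:authority_sum} with the contributions defined in \eqref{eq:authority_components}. If \(\configurationDimension=0\), the configuration block is empty and \(\configurationAuthorityContribution=0\) by convention.

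\textbf{Step 3 (semidefiniteness).} For any covector coordinates \(\eta\in\RealNumbers^{\wrenchDimension}\),
\[
\eta^{\transpose}\speedAuthorityContribution\eta=\vectorNorm{\speedCapabilityMatrix\speedJacobian^{\transpose}\eta}^{2}\ge0,
\]
using that \(\speedCapabilityMatrix\) is symmetric. The same argument with \(\configurationCapabilityMatrix\) and \(\configurationJacobian\) gives \(\eta^{\transpose}\configurationAuthorityContribution\eta\ge0\). Both matrices are also visibly symmetric.

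No step is a real obstacle. The only point requiring care is confirming that the coordinate formula for \(\taskRateAuthorityMatrix\) is valid at every \(\actuatorState\), including non-regular states. One can check this intrinsically: the definition of \(\authorityMap_{\actuatorState}\) as a composition does not use surjectivity, and its matrix in product coordinates is \(\actuatorMapJacobian\metricMatrix^{-1}\actuatorMapJacobian^{\transpose}\) because the matrix of \(\musicalSharp_{\metric,\actuatorState}\) is \(\metricMatrix^{-1}\). This justifies the identity on all of \(\ActuatorStateSpace\).
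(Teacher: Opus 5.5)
Your proposal is correct and matches the paper's proof: insert the block-diagonal capability metric and the partition \(\actuatorMapJacobian=[\speedJacobian\ \configurationJacobian]\) into \(\taskRateAuthorityMatrix=\actuatorMapJacobian\metricMatrix^{-1}\actuatorMapJacobian^{\transpose}\), note that the cross terms vanish, and recognize each summand as a Gram matrix. Your added check that this coordinate formula holds at every actuator state, not only on \(\RegularActuatorSet\), is a useful clarification that the paper leaves implicit.
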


\begin{proof}
Insert~\eqref{eq:capability_metric} and~\eqref{eq:jacobian_partition} into~\eqref{eq:authority_matrix}. The off-diagonal blocks vanish, yielding~\eqref{eq:authority_sum}. Each term in~\eqref{eq:authority_components} is a Gram matrix.
\end{proof}

\begin{remark}[What is additive]
Propeller-acceleration and orientation-mechanism contributions add at the bilinear-form level, not at the scalar promptness level. In general, \(\sqrt{\matrixDeterminant(\speedAuthorityContribution(\actuatorState)+\configurationAuthorityContribution(\actuatorState))}\) cannot be separated into a sum of two scalar promptness measures.
\end{remark}

\begin{proposition}[Invariance of fiberwise promptness maximizers]
\label{prop:invariance}
For a fixed wrench \(\bodyWrench\), the maximizers of \(\metricWeightedPromptness\) over \(\fiberAt{\bodyWrench}\) are invariant under smooth changes of actuator coordinates, smooth changes of wrench coordinates, and the choice of task metric \(\taskMetric\).
\end{proposition}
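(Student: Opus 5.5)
The plan is to show that, along the fixed fiber \(\fiberAt{\bodyWrench}\), each of the three changes either leaves \(\metricWeightedPromptness\) unchanged as a function on \(\ActuatorStateSpace\) or multiplies it by a positive constant. A positive constant factor preserves the set of maximizers. The key observation is that every point of the fiber has the same image \(\actuatorMap(\actuatorState)=\bodyWrench\). Any quantity evaluated at the wrench, such as \(\taskMetricMatrix(\bodyWrench)\) or the Jacobian of a wrench-coordinate change, is therefore a single fixed matrix over the whole fiber.

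First, actuator coordinates. Let \(\actuatorState=\phi(\actuatorState')\) be a smooth change of coordinates with invertible Jacobian \(\Phi\). The actuator-map Jacobian becomes \(\actuatorMapJacobian\Phi\), and the metric matrix, being the coordinate matrix of the tensor \(\metric\), becomes \(\Phi^{\transpose}\metricMatrix\Phi\). Substituting into \eqref{eq:authority_matrix} gives
\[
\actuatorMapJacobian\Phi\,\Phi^{-1}\metricMatrix^{-1}\Phi^{-\transpose}\,\Phi^{\transpose}\actuatorMapJacobian^{\transpose}
=
\taskRateAuthorityMatrix .
\]
This simply reflects that \(\authorityMap_{\actuatorState}\) is defined intrinsically in \Cref{def:authority_map}. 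Hence \(\metricWeightedPromptness\) is the same function on \(\ActuatorStateSpace\). The maximizers are the same points of the manifold; only their coordinate labels change.

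Second, wrench coordinates. Let \(\bodyWrench'=\psi(\bodyWrench)\) with Jacobian \(\Psi\). The authority matrix transforms as a contravariant 2-tensor, \(\taskRateAuthorityMatrix'=\Psi\taskRateAuthorityMatrix\Psi^{\transpose}\). If \(\taskMetric\) is kept as a fixed tensor, then \(\taskMetricMatrix'=\Psi^{-\transpose}\taskMetricMatrix\Psi^{-1}\). It follows that \(\taskRateAuthorityMatrix'\taskMetricMatrix'=\Psi\taskRateAuthorityMatrix\taskMetricMatrix\Psi^{-1}\), which is a similarity transformation, so the determinant is unchanged and \(\metricWeightedPromptness\) is invariant. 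If instead the task metric is declared Euclidean in the new coordinates, the promptness is multiplied by \(\absoluteValue{\matrixDeterminant\Psi(\bodyWrench)}\). This factor is a nonzero constant on the fiber, and this case is in any event covered by the third step.

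Third, the task metric. By multiplicativity of the determinant,
\[
\metricWeightedPromptness(\actuatorState)=\sqrt{\matrixDeterminant\taskMetricMatrix(\bodyWrench)}\;\aerodynamicPromptness(\actuatorState)
\quad\text{for all } \actuatorState\in\fiberAt{\bodyWrench}.
\]
The factor \(\sqrt{\matrixDeterminant\taskMetricMatrix(\bodyWrench)}\) is positive and constant on the fiber. Thus every choice of \(\taskMetric\) gives the same \(\argmaximum\) as \(\aerodynamicPromptness\), and the same holds for the feasible fiber \(\fiberAt{\bodyWrench}\cap\AdmissibleActuatorStateSpace\).

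The main point requiring care is bookkeeping rather than technical difficulty. One must separate the two readings of a wrench-coordinate change, namely tensorial transport of \(\taskMetric\) versus re-imposing a Euclidean metric, and check in each case that the resulting factor depends only on \(\bodyWrench\). The whole argument reduces to this fiberwise-constancy observation, so that is the step I would state explicitly.
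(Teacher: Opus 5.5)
Your proposal is correct and follows essentially the same route as the paper's proof. Actuator-coordinate changes leave the intrinsically defined authority form unchanged, and wrench-coordinate changes and the choice of \(\taskMetric\) contribute only positive factors (\(\absoluteValue{\matrixDeterminant\Psi(\bodyWrench)}\) and \(\sqrt{\matrixDeterminant\taskMetricMatrix(\bodyWrench)}\)) that are constant on \(\fiberAt{\bodyWrench}\) because every fiber point maps to the same \(\bodyWrench\). Your separation of tensorial transport of \(\taskMetric\) (exact invariance via a similarity transformation) from re-imposing a Euclidean metric is a slightly more careful version of the paper's single remark about the coordinate-change Jacobian.
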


\begin{proof}
The bilinear form \(\taskRateAuthorityForm\) is defined independently of coordinates, so actuator-coordinate changes affect only its representation. A wrench-coordinate change multiplies the coordinate volume by the absolute determinant of the coordinate-change Jacobian evaluated at \(\bodyWrench\). Likewise, \(\taskMetric\) contributes the positive factor \(\sqrt{\matrixDeterminant\taskMetricMatrix(\bodyWrench)}\). Both are constant on \(\fiberAt{\bodyWrench}\) and therefore preserve its maximizers, as established for the generic construction in~\cite{Franchi2026Promptness}.
\end{proof}

\begin{definition}[Promptness-optimal section]
For each regular wrench, the set of promptness-optimal realizations is
\begin{equation}
\argmaximum_{\actuatorState\in\fiberAt{\bodyWrench}\cap\AdmissibleActuatorStateSpace}
\aerodynamicPromptness(\actuatorState).
\label{eq:promptness_optimization}
\end{equation}
Let \(\mathcal{O}\subseteq\WrenchSpace\) be open and contained in the set of regular values of \(\actuatorMap\). A promptness-optimal local section is a continuously differentiable map \(\chi:\mathcal{O}\to\AdmissibleActuatorStateSpace\) satisfying \(\actuatorMap\circ\chi=\operatorname{id}_{\mathcal{O}}\) and \(\chi(\bodyWrench)\) belonging to the set in~\eqref{eq:promptness_optimization} for every \(\bodyWrench\in\mathcal{O}\).
\end{definition}

\section{Connection with Dynamic Feedback Linearization}
\label{sec:dfl}

\subsection{General actuator dynamics}

Let \(\InputSpace\) be a finite-dimensional real input space. Let
\(\actuatorDrift:\ActuatorStateSpace\to T\ActuatorStateSpace\) be a
smooth vector field, so that
\(\actuatorDrift(\actuatorState)
\in\ActuatorTangentSpaceAt{\actuatorState}\) for every
\(\actuatorState\in\ActuatorStateSpace\).

Let
\[
\actuatorInputMap:
\ActuatorStateSpace\times\InputSpace
\to
T\ActuatorStateSpace
\]
be smooth, satisfy
\(\actuatorInputMap(\actuatorState,\actuatorVelocityInput)
\in\ActuatorTangentSpaceAt{\actuatorState}\), and be linear in
\(\actuatorVelocityInput\) for every fixed \(\actuatorState\).
Accordingly,
\[
\actuatorInputMapAt{\actuatorState}:
\InputSpace
\to
\ActuatorTangentSpaceAt{\actuatorState},
\quad
\actuatorInputMapAt{\actuatorState}(\actuatorVelocityInput)
:=
\actuatorInputMap(\actuatorState,\actuatorVelocityInput),
\]
is a linear map. The actuator dynamics are
\begin{equation}
\actuatorRate
=
\actuatorDrift(\actuatorState)
+
\actuatorInputMapAt{\actuatorState}(\actuatorVelocityInput).
\label{eq:actuator_dynamics}
\end{equation}

In selected actuator and input coordinates, let
\(\actuatorInputMatrix(\actuatorState)\) denote the matrix representing
\(\actuatorInputMapAt{\actuatorState}\). Equation~\eqref{eq:actuator_dynamics}
then becomes
\[
\actuatorRate
=
\actuatorDrift(\actuatorState)
+
\actuatorInputMatrix(\actuatorState)\actuatorVelocityInput.
\]

\begin{definition}[Dynamic-extension decoupling map]
For every \(\actuatorState\in\ActuatorStateSpace\), the decoupling map
is the linear map
\[
\decouplingMapAt{\actuatorState}
:
\InputSpace
\to
\WrenchTangentSpaceAt{\actuatorMap(\actuatorState)}
\]
defined by
\[
\decouplingMapAt{\actuatorState}
:=
\actuatorMapDifferentialAt{\actuatorState}
\circ
\actuatorInputMapAt{\actuatorState}.
\]
Its matrix in the selected input and wrench coordinates is
\[
\decouplingMatrix(\actuatorState)
:=
\actuatorMapJacobian(\actuatorState)
\actuatorInputMatrix(\actuatorState).
\]
\end{definition}

The corresponding wrench-rate dynamics are
\[
\taskRate
=
\actuatorMapDifferentialAt{\actuatorState}
\bigl(\actuatorDrift(\actuatorState)\bigr)
+
\decouplingMapAt{\actuatorState}(\actuatorVelocityInput),
\]
or, in the selected coordinates,
\[
\taskRate
=
\actuatorMapJacobian(\actuatorState)
\actuatorDrift(\actuatorState)
+
\decouplingMatrix(\actuatorState)
\actuatorVelocityInput.
\]

\begin{definition}[Input metric]
An input metric is a positive-definite bilinear form
\[
\inputMetric:
\InputSpace\times\InputSpace
\to
\RealNumbers.
\]
Let
\(\inputMusicalSharp:\InputSpace^{*}\to\InputSpace\) be the
corresponding musical isomorphism. In the selected input coordinates,
let \(\inputMetricMatrix\) be the positive-definite matrix representing
\(\inputMetric\); the matrix representing \(\inputMusicalSharp\) is
\(\inputMetricMatrix^{-1}\).
\end{definition}

\begin{definition}[Input-induced actuator-rate authority map]
For each \(\actuatorState\in\ActuatorStateSpace\), the actuator input
map and the input metric define the linear map
\[
\inputActuatorAuthorityMapAt{\actuatorState}
:
\ActuatorCotangentSpaceAt{\actuatorState}
\to
\ActuatorTangentSpaceAt{\actuatorState}
\]
given by
\[
\inputActuatorAuthorityMapAt{\actuatorState}
:=
\actuatorInputMapAt{\actuatorState}
\circ
\inputMusicalSharp
\circ
\actuatorInputMapAt{\actuatorState}^{*}.
\]
Its coordinate matrix is
\[
\inputActuatorAuthorityMatrix(\actuatorState)
:=
\actuatorInputMatrix(\actuatorState)
\inputMetricMatrix^{-1}
\actuatorInputMatrix(\actuatorState)^{\transpose}.
\]
\end{definition}

The map
\(\inputActuatorAuthorityMapAt{\actuatorState}\) is positive
semidefinite in the sense that
\[
\eta\!\left(
\inputActuatorAuthorityMapAt{\actuatorState}(\eta)
\right)
=
\inputMetric^{-1}\!\left(
\actuatorInputMapAt{\actuatorState}^{*}\eta,
\actuatorInputMapAt{\actuatorState}^{*}\eta
\right)
\geq 0
\]
for every
\(\eta\in\ActuatorCotangentSpaceAt{\actuatorState}\).

If
\(\actuatorInputMapAt{\actuatorState}:
\InputSpace\to\ActuatorTangentSpaceAt{\actuatorState}\)
is an isomorphism for every \(\actuatorState\), then
\(\inputActuatorAuthorityMapAt{\actuatorState}\) is an isomorphism.
Its inverse defines a Riemannian metric on
\(\ActuatorStateSpace\). Otherwise, its image is
\(\image\actuatorInputMapAt{\actuatorState}\), and it represents only
the actuator-state velocities generated through the input map.

\begin{theorem}[Input-induced task-rate authority]
\label{thm:dfl}
For every \(\actuatorState\in\ActuatorStateSpace\), the decoupling map
and the input metric define the input-induced task-rate authority map
\[
\dflAuthorityMapAt{\actuatorState}
:
\WrenchCotangentSpaceAt{\actuatorMap(\actuatorState)}
\to
\WrenchTangentSpaceAt{\actuatorMap(\actuatorState)}
\]
given by
\[
\dflAuthorityMapAt{\actuatorState}
:=
\decouplingMapAt{\actuatorState}
\circ
\inputMusicalSharp
\circ
\decouplingMapAt{\actuatorState}^{*}.
\]
Equivalently,
\[
\dflAuthorityMapAt{\actuatorState}
=
\actuatorMapDifferentialAt{\actuatorState}
\circ
\inputActuatorAuthorityMapAt{\actuatorState}
\circ
\actuatorMapDifferentialAt{\actuatorState}^{*}.
\]

The matrix representing
\(\dflAuthorityMapAt{\actuatorState}\) in the selected wrench
coordinates is
\[
\dflAuthorityMatrix(\actuatorState)
:=
\decouplingMatrix(\actuatorState)
\inputMetricMatrix^{-1}
\decouplingMatrix(\actuatorState)^{\transpose}.
\]
Consequently,
\begin{equation}
\dflPromptness(\actuatorState)
:=
\sqrt{
\matrixDeterminant
\dflAuthorityMatrix(\actuatorState)
}
>0
\quad\Longleftrightarrow\quad
\matrixRank
\decouplingMatrix(\actuatorState)
=
\wrenchDimension.
\label{eq:dfl_equivalence}
\end{equation}
\end{theorem}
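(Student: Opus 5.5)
The proof will mirror the argument behind \Cref{prop:horizontal_inverse} and \Cref{thm:promptness_rank}, replacing the actuator metric by the input metric and the differential by the decoupling map. First, I will verify that the two expressions for \(\dflAuthorityMapAt{\actuatorState}\) agree. The dual of a composition reverses the order, so
\(\decouplingMapAt{\actuatorState}^{*}=\actuatorInputMapAt{\actuatorState}^{*}\circ\actuatorMapDifferentialAt{\actuatorState}^{*}\).
Substituting this into \(\decouplingMapAt{\actuatorState}\circ\inputMusicalSharp\circ\decouplingMapAt{\actuatorState}^{*}\) and regrouping by associativity gives
\(\actuatorMapDifferentialAt{\actuatorState}\circ(\actuatorInputMapAt{\actuatorState}\circ\inputMusicalSharp\circ\actuatorInputMapAt{\actuatorState}^{*})\circ\actuatorMapDifferentialAt{\actuatorState}^{*}\).
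The middle factor is exactly \(\inputActuatorAuthorityMapAt{\actuatorState}\).

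Next, I will derive the coordinate matrix. In the selected coordinates, dual maps are represented by transposes and \(\inputMusicalSharp\) is represented by \(\inputMetricMatrix^{-1}\). This gives \(\dflAuthorityMatrix=\decouplingMatrix\inputMetricMatrix^{-1}\decouplingMatrix^{\transpose}\). Using \(\decouplingMatrix=\actuatorMapJacobian\actuatorInputMatrix\), this equals \(\actuatorMapJacobian\inputActuatorAuthorityMatrix\actuatorMapJacobian^{\transpose}\), which confirms the second expression in coordinates as well.

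For the equivalence \eqref{eq:dfl_equivalence}, I will use a Gram-matrix argument. Factor \(\inputMetricMatrix^{-1}=L L^{\transpose}\) with \(L\) invertible, for example a Cholesky factor. Then \(\dflAuthorityMatrix=(\decouplingMatrix L)(\decouplingMatrix L)^{\transpose}\). This matrix is symmetric positive semidefinite, and its rank equals \(\matrixRank(\decouplingMatrix L)=\matrixRank\decouplingMatrix\). An \(\wrenchDimension\times\wrenchDimension\) positive-semidefinite matrix has strictly positive determinant exactly when it has full rank \(\wrenchDimension\). Otherwise its determinant is zero, since it is never negative. Hence the square root defining \(\dflPromptness\) is well defined and nonnegative, and it is positive if and only if \(\matrixRank\decouplingMatrix(\actuatorState)=\wrenchDimension\).

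I do not expect a genuine obstacle. The only care needed is bookkeeping: getting the dual of the composition in the correct order, and identifying the coordinate representation of the musical isomorphism with \(\inputMetricMatrix^{-1}\) consistently with the convention already used for \(\metricMatrix^{-1}\) in \eqref{eq:authority_matrix}. If a coordinate-free statement of the rank condition is preferred, I will repeat the injectivity argument from \Cref{prop:horizontal_inverse}. For \(\eta\) in the wrench cotangent space, \(\eta(\dflAuthorityMapAt{\actuatorState}\eta)=\inputMetric^{-1}(\decouplingMapAt{\actuatorState}^{*}\eta,\decouplingMapAt{\actuatorState}^{*}\eta)\). This vanishes only when \(\decouplingMapAt{\actuatorState}^{*}\eta=0\). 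Consequently \(\dflAuthorityMapAt{\actuatorState}\) is an isomorphism exactly when \(\decouplingMapAt{\actuatorState}\) is surjective.
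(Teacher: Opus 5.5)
Your proposal is correct and follows essentially the paper's route: the paper's proof is precisely your closing coordinate-free argument, namely \(\eta(\dflAuthorityMapAt{\actuatorState}(\eta))=\inputMetric^{-1}(\decouplingMapAt{\actuatorState}^{*}\eta,\decouplingMapAt{\actuatorState}^{*}\eta)\), so the kernel of \(\dflAuthorityMapAt{\actuatorState}\) equals that of \(\decouplingMapAt{\actuatorState}^{*}\), and this is then translated into the rank and determinant conditions in coordinates. Your Cholesky/Gram-matrix factorization is the coordinate form of the same positive-definiteness observation, and your explicit checks supply steps the paper leaves implicit: that \(\decouplingMapAt{\actuatorState}^{*}=\actuatorInputMapAt{\actuatorState}^{*}\circ\actuatorMapDifferentialAt{\actuatorState}^{*}\), the coordinate matrix, and nonnegativity of the determinant (so that \(\dflPromptness\) is well defined).
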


\begin{proof}
For every
\(\eta\in
\WrenchCotangentSpaceAt{\actuatorMap(\actuatorState)}\),
\[
\eta\!\left(
\dflAuthorityMapAt{\actuatorState}(\eta)
\right)
=
\inputMetric^{-1}\!\left(
\decouplingMapAt{\actuatorState}^{*}\eta,
\decouplingMapAt{\actuatorState}^{*}\eta
\right).
\]
Because \(\inputMetric^{-1}\) is positive definite, the kernel of
\(\dflAuthorityMapAt{\actuatorState}\) equals the kernel of
\(\decouplingMapAt{\actuatorState}^{*}\). Therefore,
\(\dflAuthorityMapAt{\actuatorState}\) is an isomorphism if and only if
\(\decouplingMapAt{\actuatorState}\) is surjective. In coordinates,
this is equivalent to
\(\matrixRank\decouplingMatrix(\actuatorState)=\wrenchDimension\),
which in turn is equivalent to the positivity of the determinant in
\eqref{eq:dfl_equivalence}.
\end{proof}

\begin{corollary}[Direct actuator-rate inputs]
\label{cor:direct_actuator_rate_inputs}
At a fixed \(\actuatorState\in\ActuatorStateSpace\), suppose that \(\InputSpace\) is identified with \(\ActuatorTangentSpaceAt{\actuatorState}\), \(\actuatorInputMapAt{\actuatorState}\) is the identity map,
\(\actuatorDrift(\actuatorState)=0\), and
\(\inputMetric=\metric_{\actuatorState}\). Then
\[
\decouplingMapAt{\actuatorState}
=
\actuatorMapDifferentialAt{\actuatorState},
\quad
\dflAuthorityMapAt{\actuatorState}
=
\authorityMap_{\actuatorState}.
\]
In the corresponding coordinates,
\[
\decouplingMatrix(\actuatorState)
=
\actuatorMapJacobian(\actuatorState),
\quad
\dflAuthorityMatrix(\actuatorState)
=
\taskRateAuthorityMatrix(\actuatorState).
\]
Dynamic feedback linearization is therefore feasible exactly where
aerodynamic promptness is nonzero.
\end{corollary}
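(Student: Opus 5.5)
The plan is to substitute the three hypotheses of the corollary directly into the definitions of the decoupling map and of the input-induced authority map, and then read off the feasibility claim from \Cref{thm:dfl} combined with \Cref{thm:promptness_rank}.

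First, with \(\InputSpace\) identified with \(\ActuatorTangentSpaceAt{\actuatorState}\) and \(\actuatorInputMapAt{\actuatorState}=\operatorname{id}\), the definition \(\decouplingMapAt{\actuatorState}=\actuatorMapDifferentialAt{\actuatorState}\circ\actuatorInputMapAt{\actuatorState}\) immediately gives \(\decouplingMapAt{\actuatorState}=\actuatorMapDifferentialAt{\actuatorState}\). Dualizing, \(\decouplingMapAt{\actuatorState}^{*}=\actuatorMapDifferentialAt{\actuatorState}^{*}\). Because \(\inputMetric=\metric_{\actuatorState}\), the input musical isomorphism \(\inputMusicalSharp\) coincides with \(\musicalSharp_{\metric,\actuatorState}\). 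Inserting these into the first expression for \(\dflAuthorityMapAt{\actuatorState}\) in \Cref{thm:dfl} then yields exactly \(\actuatorMapDifferentialAt{\actuatorState}\circ\musicalSharp_{\metric,\actuatorState}\circ\actuatorMapDifferentialAt{\actuatorState}^{*}=\authorityMap_{\actuatorState}\) by \Cref{def:authority_map}.

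In coordinates, I will choose the input coordinates to be the product actuator coordinates. Then \(\actuatorInputMatrix(\actuatorState)=\identityMatrix{\actuatorCount+\configurationDimension}\) and \(\inputMetricMatrix=\metricMatrix(\actuatorState)\). This gives \(\decouplingMatrix=\actuatorMapJacobian\) and \(\dflAuthorityMatrix=\actuatorMapJacobian\metricMatrix^{-1}\actuatorMapJacobian^{\transpose}=\taskRateAuthorityMatrix\), matching \eqref{eq:authority_matrix}.

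The drift hypothesis \(\actuatorDrift(\actuatorState)=0\) is not needed for the identities themselves. It only ensures that the wrench-rate dynamics reduce to \(\taskRate=\decouplingMapAt{\actuatorState}(\actuatorVelocityInput)\), so no drift term needs compensation.

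For the final sentence, I take feasibility of dynamic feedback linearization at \(\actuatorState\) to mean surjectivity of the decoupling map, i.e.\ \(\matrixRank\decouplingMatrix(\actuatorState)=\wrenchDimension\). This is the interpretation already adopted in \Cref{thm:dfl}. By \eqref{eq:dfl_equivalence} it is equivalent to \(\dflPromptness(\actuatorState)>0\), and since \(\dflAuthorityMatrix=\taskRateAuthorityMatrix\), we have \(\dflPromptness=\aerodynamicPromptness\). Alternatively, \(\decouplingMatrix=\actuatorMapJacobian\) together with \Cref{thm:promptness_rank} gives the same conclusion directly.

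The only delicate point is this last interpretive step. The corollary's identification is pointwise at a fixed \(\actuatorState\), so the conclusion should be stated as a pointwise rank condition on the decoupling matrix rather than as a claim about the full nonlinear design of a dynamic compensator.
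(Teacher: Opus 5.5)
Your proposal is correct and matches the paper's argument. The paper obtains the corollary by substituting \(\actuatorInputMapAt{\actuatorState}=\operatorname{id}\) and \(\inputMetric=\metric_{\actuatorState}\) into the decoupling and input-induced authority maps of \Cref{thm:dfl}, and then reads the feasibility claim off~\eqref{eq:dfl_equivalence} (equivalently, \Cref{thm:promptness_rank}); your observation that the drift hypothesis plays no role in the identities themselves is also accurate.
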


\begin{remark}[Required qualification]
For general actuator dynamics, positive promptness of the unrestricted
actuator map \(\actuatorMap\) is necessary but not sufficient for
dynamic feedback linearization. The image of
\(\actuatorInputMapAt{\actuatorState}\) must contain enough
actuator-state directions for
\(\decouplingMapAt{\actuatorState}
=
\actuatorMapDifferentialAt{\actuatorState}
\circ
\actuatorInputMapAt{\actuatorState}\)
to be surjective. Exact equivalence holds for the input-induced
promptness \(\dflPromptness\) in~\eqref{eq:dfl_equivalence}.
\end{remark}

\subsection{Conditioning interpretation}

\begin{proposition}[Minimum normalized input effort]
Suppose that
\(\decouplingMapAt{\actuatorState}\) is surjective. Define
\[
\dflHorizontalInverseAt{\actuatorState}
:
\WrenchTangentSpaceAt{\actuatorMap(\actuatorState)}
\to
\InputSpace
\]
by
\[
\dflHorizontalInverseAt{\actuatorState}
:=
\inputMusicalSharp
\circ
\decouplingMapAt{\actuatorState}^{*}
\circ
\dflAuthorityMapAt{\actuatorState}^{-1}.
\]
For every
\(\driftCompensatedTaskRate
\in
\WrenchTangentSpaceAt{\actuatorMap(\actuatorState)}\),
the input
\[
\actuatorVelocityInput
=
\dflHorizontalInverseAt{\actuatorState}
(\driftCompensatedTaskRate)
\]
is the unique minimum-\(\inputMetric\)-norm input satisfying
\[
\decouplingMapAt{\actuatorState}
(\actuatorVelocityInput)
=
\driftCompensatedTaskRate.
\]
Its squared input norm is
\[
\vectorNorm{\actuatorVelocityInput}_{\inputMetric}^{2}
=
\driftCompensatedTaskRate^{\transpose}
\dflAuthorityMatrix(\actuatorState)^{-1}
\driftCompensatedTaskRate.
\]
\end{proposition}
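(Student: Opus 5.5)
The plan is to reproduce, at the input level, the argument of \Cref{prop:horizontal_inverse}, with \(\InputSpace\), \(\inputMetric\), \(\decouplingMapAt{\actuatorState}\) and \(\dflAuthorityMapAt{\actuatorState}\) in place of \(\ActuatorTangentSpaceAt{\actuatorState}\), \(\metric_{\actuatorState}\), \(\actuatorMapDifferentialAt{\actuatorState}\) and \(\authorityMap_{\actuatorState}\).

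First, \Cref{thm:dfl} shows that surjectivity of \(\decouplingMapAt{\actuatorState}\) makes \(\dflAuthorityMapAt{\actuatorState}\) an isomorphism, so \(\dflHorizontalInverseAt{\actuatorState}\) is well defined. Composing gives
\[
\decouplingMapAt{\actuatorState}\circ\dflHorizontalInverseAt{\actuatorState}
=\dflAuthorityMapAt{\actuatorState}\circ\dflAuthorityMapAt{\actuatorState}^{-1}
=\operatorname{id},
\]
so \(\actuatorVelocityInput=\dflHorizontalInverseAt{\actuatorState}(\driftCompensatedTaskRate)\) satisfies the constraint.

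Second, I will show minimality and uniqueness. Every \(\actuatorVelocityInput'\) with \(\decouplingMapAt{\actuatorState}(\actuatorVelocityInput')=\driftCompensatedTaskRate\) can be written as \(\actuatorVelocityInput+z\) with \(z\in\kernel\decouplingMapAt{\actuatorState}\). Setting \(\eta:=\dflAuthorityMapAt{\actuatorState}^{-1}(\driftCompensatedTaskRate)\), we have \(\actuatorVelocityInput=\inputMusicalSharp\decouplingMapAt{\actuatorState}^{*}\eta\), and hence
\[
\inputMetric(\actuatorVelocityInput,z)=\eta\bigl(\decouplingMapAt{\actuatorState}z\bigr)=0.
\]
This orthogonality gives the Pythagorean identity \(\vectorNorm{\actuatorVelocityInput'}_{\inputMetric}^{2}=\vectorNorm{\actuatorVelocityInput}_{\inputMetric}^{2}+\vectorNorm{z}_{\inputMetric}^{2}\). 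The norm is therefore minimized exactly at \(z=0\), which proves uniqueness.

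Third, I will compute the norm, preferably intrinsically:
\[
\vectorNorm{\actuatorVelocityInput}_{\inputMetric}^{2}
=\inputMetric^{-1}\bigl(\decouplingMapAt{\actuatorState}^{*}\eta,\decouplingMapAt{\actuatorState}^{*}\eta\bigr)
=\eta\bigl(\dflAuthorityMapAt{\actuatorState}\eta\bigr)
=\eta(\driftCompensatedTaskRate).
\]
In coordinates, \(\eta\) is represented by \(\dflAuthorityMatrix^{-1}\driftCompensatedTaskRate\), and \(\dflAuthorityMatrix\) is symmetric, so this equals \(\driftCompensatedTaskRate^{\transpose}\dflAuthorityMatrix^{-1}\driftCompensatedTaskRate\). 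As a check, the direct matrix computation gives \(\actuatorVelocityInput=\inputMetricMatrix^{-1}\decouplingMatrix^{\transpose}\dflAuthorityMatrix^{-1}\driftCompensatedTaskRate\), which yields \(\driftCompensatedTaskRate^{\transpose}\dflAuthorityMatrix^{-1}\decouplingMatrix\inputMetricMatrix^{-1}\decouplingMatrix^{\transpose}\dflAuthorityMatrix^{-1}\driftCompensatedTaskRate\), the same expression.

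There is no real obstacle. The only care point is the coordinate bookkeeping in this last step: covectors must be identified with their coordinate columns so that \(\dflAuthorityMatrix^{-1}\) appears rather than some other matrix. One could also note that \(\driftCompensatedTaskRate\) stands for \(\taskRate-\actuatorMapDifferentialAt{\actuatorState}(\actuatorDrift(\actuatorState))\), but this interpretation is not needed for the proof.
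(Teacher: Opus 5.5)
Your proposal is correct and takes the same route as the paper. The paper's proof simply invokes the construction of \Cref{prop:horizontal_inverse} for the linear map \(\decouplingMapAt{\actuatorState}:\InputSpace\to\WrenchTangentSpaceAt{\actuatorMap(\actuatorState)}\) with the input metric \(\inputMetric\); you unroll that argument (right-inverse property, \(\inputMetric\)-orthogonality of the solution to \(\kernel\decouplingMapAt{\actuatorState}\), Pythagorean identity) and also verify the squared-norm formula \(\eta(\driftCompensatedTaskRate)=\driftCompensatedTaskRate^{\transpose}\dflAuthorityMatrix(\actuatorState)^{-1}\driftCompensatedTaskRate\), which the paper's one-line citation leaves implicit.
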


\begin{proof}
This is the construction of
\Cref{prop:horizontal_inverse} applied to the linear map
\(\decouplingMapAt{\actuatorState}:
\InputSpace\to
\WrenchTangentSpaceAt{\actuatorMap(\actuatorState)}\)
with the input metric \(\inputMetric\).
\end{proof}

The rank condition on
\(\decouplingMatrix(\actuatorState)\) determines the boundary of the
dynamic-feedback-linearization domain. The spectrum of
\(\dflAuthorityMatrix(\actuatorState)\) quantifies the directional
input authority and its conditioning, whereas
\(\dflPromptness(\actuatorState)\) summarizes the corresponding
task-rate volume.

\section{Task-Preserving Promptness Shaping}
\label{sec:shaping}

\begin{definition}[Promptness-shaping actuator-rate law]
At \(\actuatorState\in\RegularActuatorSet\), for \(\desiredTaskRate\in\WrenchTangentSpaceAt{\actuatorMap(\actuatorState)}\) and \(\internalMotion\in\ActuatorTangentSpaceAt{\actuatorState}\), define
\begin{equation}
\actuatorRate
=
\horizontalInverseMapAt{\actuatorState}(\desiredTaskRate)
+
\verticalProjectorMapAt{\actuatorState}(\internalMotion).
\label{eq:general_shaping_law}
\end{equation}
The first term is horizontal and realizes the desired wrench rate; the second is vertical and changes the actuator state without changing the instantaneous wrench rate.
\end{definition}

\begin{proposition}[Exact task preservation]
Every actuator-rate law of the form~\eqref{eq:general_shaping_law} satisfies
\[
\actuatorMapDifferentialAt{\actuatorState}(\actuatorRate)=\desiredTaskRate.
\]
In the selected actuator and wrench coordinates, this identity is represented by \(\actuatorMapJacobian(\actuatorState)\actuatorRate=\desiredTaskRate\).
\end{proposition}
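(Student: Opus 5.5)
The plan is to apply the differential \(\actuatorMapDifferentialAt{\actuatorState}\) to both terms of~\eqref{eq:general_shaping_law} and use linearity, so that
\[
\actuatorMapDifferentialAt{\actuatorState}(\actuatorRate)
=
\actuatorMapDifferentialAt{\actuatorState}\bigl(\horizontalInverseMapAt{\actuatorState}(\desiredTaskRate)\bigr)
+
\actuatorMapDifferentialAt{\actuatorState}\bigl(\verticalProjectorMapAt{\actuatorState}(\internalMotion)\bigr).
\]
The two terms are then handled separately with results already established for \(\actuatorState\in\RegularActuatorSet\).

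For the first term, I will invoke \Cref{prop:horizontal_inverse}. It gives \(\actuatorMapDifferentialAt{\actuatorState}\circ\horizontalInverseMapAt{\actuatorState}=\operatorname{id}\) on \(\WrenchTangentSpaceAt{\actuatorMap(\actuatorState)}\). This follows directly from~\eqref{eq:intrinsic_horizontal_inverse}, because \(\actuatorMapDifferentialAt{\actuatorState}\circ\musicalSharp_{\metric,\actuatorState}\circ\actuatorMapDifferentialAt{\actuatorState}^{*}=\authorityMap_{\actuatorState}\) and \(\authorityMap_{\actuatorState}\) is invertible on the regular set. The first term therefore equals \(\desiredTaskRate\).

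For the second term, \Cref{prop:vertical_projector} states that \(\verticalProjectorMapAt{\actuatorState}\) maps into \(\verticalSpaceAt{\actuatorState}=\kernel\actuatorMapDifferentialAt{\actuatorState}\), so the term vanishes. A self-contained check also works: \(\actuatorMapDifferentialAt{\actuatorState}\circ\verticalProjectorMapAt{\actuatorState}=\actuatorMapDifferentialAt{\actuatorState}-(\actuatorMapDifferentialAt{\actuatorState}\circ\horizontalInverseMapAt{\actuatorState})\circ\actuatorMapDifferentialAt{\actuatorState}=0\), which again uses only the right-inverse identity. Adding the two terms gives the claim.

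For the coordinate statement, I will substitute the matrices \(\horizontalInverseMatrix(\actuatorState)\) and \(\verticalProjectorMatrix(\actuatorState)\). This gives \(\actuatorMapJacobian\horizontalInverseMatrix=\actuatorMapJacobian\metricMatrix^{-1}\actuatorMapJacobian^{\transpose}\taskRateAuthorityMatrix^{-1}=\identityMatrix{\wrenchDimension}\) by~\eqref{eq:authority_matrix}. Hence \(\actuatorMapJacobian\verticalProjectorMatrix=\actuatorMapJacobian-\actuatorMapJacobian=0\).

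There is no substantive obstacle. The only point needing care is that the law is defined only on \(\RegularActuatorSet\), where \(\taskRateAuthorityMatrix(\actuatorState)\) is invertible by \Cref{prop:horizontal_inverse}, and that \(\internalMotion\) is arbitrary in \(\ActuatorTangentSpaceAt{\actuatorState}\). The projection therefore removes any wrench-rate component it would otherwise carry.
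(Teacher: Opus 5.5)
Your proposal is correct and matches the paper's proof. Both arguments apply linearity of \(\actuatorMapDifferentialAt{\actuatorState}\) to~\eqref{eq:general_shaping_law} and then combine \(\actuatorMapDifferentialAt{\actuatorState}\circ\horizontalInverseMapAt{\actuatorState}=\operatorname{id}\) from \Cref{prop:horizontal_inverse} with \(\actuatorMapDifferentialAt{\actuatorState}\circ\verticalProjectorMapAt{\actuatorState}=0\) from \Cref{prop:vertical_projector}; your derivation of the second identity from the first and your matrix check \(\actuatorMapJacobian\horizontalInverseMatrix=\identityMatrix{\wrenchDimension}\) just write out steps the paper leaves implicit.
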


\begin{proof}
By \Cref{prop:horizontal_inverse,prop:vertical_projector}, \(\actuatorMapDifferentialAt{\actuatorState}\circ\horizontalInverseMapAt{\actuatorState}\) is the identity on \(\WrenchTangentSpaceAt{\actuatorMap(\actuatorState)}\), whereas \(\actuatorMapDifferentialAt{\actuatorState}\circ\verticalProjectorMapAt{\actuatorState}=0\).
\end{proof}

\begin{definition}[Promptness regulation]
Define \(\promptnessSquared:\ActuatorStateSpace\to\NonnegativeReals\) by \(\promptnessSquared(\actuatorState):=\aerodynamicPromptness(\actuatorState)^{2}=\matrixDeterminant\taskRateAuthorityMatrix(\actuatorState)\). Let \(\ActuatorStateSpace^{\times}\subseteq\ActuatorStateSpace\) denote the open subset on which every propeller speed is nonzero. On this subset, \(\actuatorMap\), \(\promptnessSquared\), and the Riemannian gradient \(\riemannianGradient\promptnessSquared:\ActuatorStateSpace^{\times}\to T\ActuatorStateSpace\) are smooth. The gradient is uniquely defined by \(\metric_{\actuatorState}((\riemannianGradient\promptnessSquared)(\actuatorState),z)=\differential(\promptnessSquared)_{\actuatorState}(z)\) for every \(z\in\ActuatorTangentSpaceAt{\actuatorState}\). A smooth task-preserving regulation law on \(\ActuatorStateSpace^{\times}\cap\RegularActuatorSet\) is obtained by choosing
\begin{equation}
\internalMotion:=\promptnessGain\bigl(\promptnessReference^{2}-\promptnessSquared(\actuatorState)\bigr)(\riemannianGradient\promptnessSquared)(\actuatorState),
\quad \promptnessGain\in\PositiveReals.
\label{eq:promptness_regulator}
\end{equation}
Using squared promptness avoids differentiating the square root in~\eqref{eq:promptness}; the feedback law is considered on \(\ActuatorStateSpace^{\times}\cap\RegularActuatorSet\), where the required derivatives and the vertical projector are well defined.
\end{definition}

\begin{proposition}[Fiberwise promptness regulation]
For a constant wrench command, the purely vertical law obtained
from~\eqref{eq:general_shaping_law}
and~\eqref{eq:promptness_regulator} satisfies
\begin{equation}
\frac{\differential}{\differential t}
\promptnessSquared(\actuatorState)
=
\promptnessGain
\bigl(
\promptnessReference^{2}
-
\promptnessSquared(\actuatorState)
\bigr)
\vectorNorm{
\verticalProjectorMapAt{\actuatorState}
\bigl(
(\riemannianGradient\promptnessSquared)(\actuatorState)
\bigr)
}_{\metric}^{2}.
\label{eq:promptness_evolution}
\end{equation}
Consequently, squared promptness moves monotonically toward the squared reference
whenever its fiber-restricted gradient is nonzero. Every closed-loop
equilibrium satisfies either
\[
\promptnessSquared(\actuatorState)
=
\promptnessReference^{2}
\]
or
\[
\verticalProjectorMapAt{\actuatorState}
\bigl(
(\riemannianGradient\promptnessSquared)(\actuatorState)
\bigr)
=
0.
\]
\end{proposition}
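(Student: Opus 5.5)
The approach is a chain-rule computation along closed-loop trajectories, followed by an application of the $\metric$-orthogonality of the vertical projector from \Cref{prop:vertical_projector}. First I fix the closed loop. A constant wrench command means $\desiredTaskRate=0$, so \eqref{eq:general_shaping_law} reduces to
\[
\actuatorRate=\verticalProjectorMapAt{\actuatorState}(\internalMotion),
\]
with $\internalMotion$ given by \eqref{eq:promptness_regulator}. This vector field is defined on $\ActuatorStateSpace^{\times}\cap\RegularActuatorSet$, where $\promptnessSquared$ is smooth. By the exact task-preservation proposition, the wrench stays constant along solutions.

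Next I differentiate along a solution. Using the defining property of the Riemannian gradient,
\[
\frac{\differential}{\differential t}\promptnessSquared(\actuatorState)=\differential(\promptnessSquared)_{\actuatorState}(\actuatorRate)=\metric_{\actuatorState}\bigl((\riemannianGradient\promptnessSquared)(\actuatorState),\verticalProjectorMapAt{\actuatorState}(\internalMotion)\bigr).
\]
Substituting $\internalMotion$ and pulling out the scalar $\promptnessGain(\promptnessReference^{2}-\promptnessSquared)$ by linearity of $\verticalProjectorMapAt{\actuatorState}$ leaves the term $\metric_{\actuatorState}(\nabla,\verticalProjectorMapAt{\actuatorState}\nabla)$, where $\nabla$ denotes the gradient.

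The only step that needs an argument is $\metric_{\actuatorState}(\nabla,\verticalProjectorMapAt{\actuatorState}\nabla)=\vectorNorm{\verticalProjectorMapAt{\actuatorState}\nabla}_{\metric}^{2}$. I would decompose $\nabla=\verticalProjectorMapAt{\actuatorState}\nabla+(\nabla-\verticalProjectorMapAt{\actuatorState}\nabla)$. By \Cref{prop:vertical_projector}, $\verticalProjectorMapAt{\actuatorState}$ is the projector onto $\verticalSpaceAt{\actuatorState}$ with kernel $\horizontalSpaceAt{\actuatorState}$. Since it is idempotent, $\nabla-\verticalProjectorMapAt{\actuatorState}\nabla$ lies in its kernel, which is horizontal, and is therefore $\metric$-orthogonal to $\verticalProjectorMapAt{\actuatorState}\nabla\in\verticalSpaceAt{\actuatorState}$. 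The cross term vanishes and \eqref{eq:promptness_evolution} follows.

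For the consequences, note that $\promptnessGain>0$ and the norm factor is nonnegative. Hence $\frac{\differential}{\differential t}\promptnessSquared$ has the sign of $\promptnessReference^{2}-\promptnessSquared$, and it is strictly nonzero whenever $\promptnessSquared\neq\promptnessReference^{2}$ and $\verticalProjectorMapAt{\actuatorState}\nabla\neq0$. So $\promptnessSquared$ increases when it lies below the reference and decreases when it lies above, and it cannot cross the level $\promptnessReference^{2}$, since there the derivative vanishes. This gives monotone motion toward the reference.

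At an equilibrium, $\actuatorRate=\promptnessGain(\promptnessReference^{2}-\promptnessSquared)\,\verticalProjectorMapAt{\actuatorState}\nabla=0$, and since $\promptnessGain\neq0$ either the scalar factor or the projected gradient must vanish. This is exactly the stated dichotomy.
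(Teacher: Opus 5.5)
Your proof is correct and is essentially the paper's argument: you differentiate $\promptnessSquared$ along the purely vertical closed-loop field, apply the defining property of $\riemannianGradient\promptnessSquared$, and use that $\verticalProjectorMapAt{\actuatorState}$ is the $\metric$-orthogonal (hence self-adjoint and idempotent) projector onto $\verticalSpaceAt{\actuatorState}$ to obtain $\vectorNorm{\verticalProjectorMapAt{\actuatorState}((\riemannianGradient\promptnessSquared)(\actuatorState))}_{\metric}^{2}$, with the equilibrium dichotomy following from $\promptnessGain>0$. Your extra non-crossing claim is true, but ``the derivative vanishes there'' does not by itself justify it; the justification is that every point of the level set $\promptnessSquared=\promptnessReference^{2}$ is an equilibrium of the smooth closed-loop field, so uniqueness of solutions prevents a trajectory from reaching it in finite time.
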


\begin{proof}
Evaluate the differential of \(\promptnessSquared\) along the vertical
component of~\eqref{eq:general_shaping_law}. The defining property of
the Riemannian gradient and the metric self-adjointness of
\(\verticalProjectorMapAt{\actuatorState}\) yield
\eqref{eq:promptness_evolution}. The equilibrium characterization
follows because \(\promptnessGain>0\).
\end{proof}

\begin{remark}[Regulation rather than maximization]
A finite reference allows the controller to retain only the readiness required by the task, rather than always maximizing promptness and potentially incurring unnecessary propulsive effort, tilt-actuator activity, or proximity to hardware limits.
\end{remark}

\begin{theorem}[Preservation of actuator-map regularity]
Let \(\mathcal{K}\subset\AdmissibleActuatorStateSpace\cap\ActuatorStateSpace^{\times}\) be compact and satisfy \(\aerodynamicPromptness(\actuatorState)\geq\rho_{0}\) for every \(\actuatorState\in\mathcal{K}\), where \(\rho_{0}\in\PositiveReals\). If the commanded wrench rate and the vertical promptness regulator render \(\mathcal{K}\) forward invariant, then the actuator map remains regular and the metric horizontal inverse in~\eqref{eq:general_shaping_law} remains well defined along every trajectory in \(\mathcal{K}\). Under the direct actuator-rate-input conditions of \Cref{cor:direct_actuator_rate_inputs}, dynamic feedback linearization also remains feasible.
\end{theorem}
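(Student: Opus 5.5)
The argument is essentially pointwise and rests on forward invariance. Let \(\actuatorState(t)\), \(t\in[0,t_{f})\), be any closed-loop trajectory starting in \(\mathcal{K}\). By hypothesis \(\actuatorState(t)\in\mathcal{K}\) for all \(t\), so \(\aerodynamicPromptness(\actuatorState(t))\geq\rho_{0}>0\). \Cref{thm:promptness_rank} then gives \(\matrixRank\actuatorMapJacobian(\actuatorState(t))=\wrenchDimension\), that is, \(\actuatorState(t)\in\RegularActuatorSet\) for every \(t\). This is the regularity claim.

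Well-definedness of \eqref{eq:general_shaping_law} then follows from \Cref{prop:horizontal_inverse,prop:vertical_projector}. At each regular state, \(\authorityMap_{\actuatorState}\) is an isomorphism, so \(\horizontalInverseMapAt{\actuatorState}\) and \(\verticalProjectorMapAt{\actuatorState}\) exist. Moreover, \(\mathcal{K}\subset\ActuatorStateSpace^{\times}\), so the regulator \eqref{eq:promptness_regulator} is defined there too.

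I would also add a uniformity statement, so that "well defined" does not degenerate along trajectories. Continuity of \(\taskRateAuthorityMatrix\) on the compact set \(\mathcal{K}\) bounds its largest eigenvalue by some \(\lambda_{\max}<\infty\). Together with \(\matrixDeterminant\taskRateAuthorityMatrix\geq\rho_{0}^{2}\), this gives a lower bound on the smallest eigenvalue:
\[
\lambda_{\min}\geq\rho_{0}^{2}/\lambda_{\max}^{\wrenchDimension-1}.
\]
Hence \(\taskRateAuthorityMatrix^{-1}\), \(\horizontalInverseMatrix\) and \(\verticalProjectorMatrix\) are uniformly bounded and continuous along the trajectory. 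Uniform boundedness rules out blow-up of the control law while the trajectory remains in \(\mathcal{K}\).

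For the last claim, \Cref{cor:direct_actuator_rate_inputs} gives \(\dflAuthorityMatrix(\actuatorState)=\taskRateAuthorityMatrix(\actuatorState)\). Therefore \(\dflPromptness=\aerodynamicPromptness\geq\rho_{0}>0\) along the trajectory, and \eqref{eq:dfl_equivalence} gives \(\matrixRank\decouplingMatrix=\wrenchDimension\), which is feasibility of dynamic feedback linearization.

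The only delicate point is the logical role of forward invariance. It must be read as a hypothesis on closed-loop solutions: solutions exist and stay in \(\mathcal{K}\), where the vector field is continuous (indeed locally Lipschitz) by the uniform bounds above. I would state explicitly that no circularity arises, because the law only needs to be defined on \(\mathcal{K}\), and there it is, by the argument above.
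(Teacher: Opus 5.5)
Your argument is correct and follows the paper's proof. The lower bound \(\aerodynamicPromptness\geq\rho_{0}>0\) on the forward-invariant set gives full row rank of \(\actuatorMapJacobian\) via \Cref{thm:promptness_rank}, hence invertibility of \(\taskRateAuthorityMatrix\) and existence of the horizontal inverse, and the feedback-linearization claim follows from \Cref{cor:direct_actuator_rate_inputs} as you say. Your compactness-based bound \(\lambda_{\min}\geq\rho_{0}^{2}/\lambda_{\max}^{\wrenchDimension-1}\) is a valid strengthening the paper does not use, since its proof never invokes compactness; note, however, that the local Lipschitz property you mention comes from smoothness on the open set \(\ActuatorStateSpace^{\times}\cap\RegularActuatorSet\), not from those uniform bounds.
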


\begin{proof}
The positive lower bound and \Cref{thm:promptness_rank} imply full row rank of \(\actuatorMapJacobian(\actuatorState)\) throughout \(\mathcal{K}\). Hence, \(\taskRateAuthorityMatrix(\actuatorState)\) is invertible and the horizontal inverse exists at every state of the invariant set.
\end{proof}

\section{Specialization to Radially Tilting Multirotors}
\label{sec:radially_tilting_multirotors}

\subsection{Radially tilting architecture}

We now specialize the general formulation to multirotors whose
\(\actuatorCount\geq4\) propeller units are uniformly distributed on a
circle and tilt independently about the corresponding radial arms.
Let \(\armLength\) denote the circle radius and define the azimuth of
the \(\actuatorIndex\)-th propeller by
\[
\radialAzimuth{\actuatorIndex}
:=
2\pi(\actuatorIndex-1)
\;/\;
\actuatorCount.
\]
Its position and tilt axis are
\begin{equation}
\propellerPosition{\actuatorIndex}
=
\rotationAboutAxis{\canonicalAxis{3}}
{\radialAzimuth{\actuatorIndex}}
\begin{bmatrix}
\armLength & 0 & 0
\end{bmatrix}^{\transpose},
\quad
\tiltAxis{\actuatorIndex}
=
\propellerPosition{\actuatorIndex}
\;/\;
\armLength.
\label{eq:radial_tilting_geometry}
\end{equation}

Because all propeller tilts are independent and periodic, the
orientation-mechanism configuration is described by
\[
\OrientationConfigurationSpace
=
\TiltTorus,
\quad
\configurationDimension
=
\actuatorCount,
\quad
\configurationState
=
\tiltAngleArray.
\]
Let
\(\nominalThrustDirection{\actuatorIndex}\in\UnitSphere\) denote the
unit thrust direction of the \(\actuatorIndex\)-th propeller at zero
tilt, expressed in \(\BodyFrame\). Its direction map is
\[
\propellerDirectionMap{\actuatorIndex}:
\TiltTorus
\to
\UnitSphere,
\quad
\tiltAngleArray
\mapsto
\rotationAboutAxis{\tiltAxis{\actuatorIndex}}
{\tiltAngle{\actuatorIndex}}
\nominalThrustDirection{\actuatorIndex}.
\]

The nominal thrust directions and aerodynamic reaction-moment signs
depend on the frame and propeller conventions of the specific
platform. Substitution of these quantities and
\eqref{eq:radial_tilting_geometry} into
\eqref{eq:unit_wrench_curve} defines the corresponding radially
tilting actuator map through~\eqref{eq:physical_map}. The
Holocopter~\cite{Ryll2015Holocopter} and
Voliro~\cite{Kamel2018Voliro} are instances of this architecture.

\begin{corollary}[Regular full-wrench fibers]
\label{cor:radial_tilting_fiber_dimension}
At every regular actuator state of a radially tilting multirotor, the
complete-wrench fiber has local dimension
\[
2\actuatorCount-6.
\]
\end{corollary}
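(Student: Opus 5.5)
The plan is to obtain the corollary directly from \Cref{prop:regular_fiber}, by identifying the dimensions that enter it for the radially tilting architecture. First, I record the actuator manifold of this specialization. The propeller-speed domain \(\PropellerSpeedSpace\) is an open subset of \(\RealNumbers^{\actuatorCount}\). The orientation-mechanism configuration space is \(\OrientationConfigurationSpace=\TiltTorus\), with \(\configurationDimension=\actuatorCount\). Hence \(\ActuatorStateSpace=\PropellerSpeedSpace\times\TiltTorus\) is a smooth manifold of dimension \(\actuatorCount+\configurationDimension=2\actuatorCount\). Since the statement concerns the complete wrench, the selected task space is \(\WrenchSpace\subseteq\RealNumbers^{\fullWrenchDimension}\), so \(\wrenchDimension=\fullWrenchDimension=6\).

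Next, I check that the hypotheses of \Cref{prop:regular_fiber} hold. The direction maps \(\tiltAngleArray\mapsto\rotationAboutAxis{\tiltAxis{\actuatorIndex}}{\tiltAngle{\actuatorIndex}}\nominalThrustDirection{\actuatorIndex}\) are smooth on \(\TiltTorus\). Therefore the maps \(\unitWrenchCurve{\actuatorIndex}\) in \eqref{eq:unit_wrench_curve} are smooth. By \Cref{lem:differential}, the actuator map \eqref{eq:physical_map} is then continuously differentiable on \(\ActuatorStateSpace\). Take \(\actuatorState\in\RegularActuatorSet\), that is, \(\matrixRank\actuatorMapJacobian(\actuatorState)=6\). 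By \Cref{prop:regular_fiber}, the fiber \(\fiberAt{\actuatorMap(\actuatorState)}\) is, near \(\actuatorState\), an embedded \(C^{1}\) submanifold of dimension \(\actuatorCount+\configurationDimension-\wrenchDimension=2\actuatorCount-6\). Equivalently, its tangent space \(\kernel\actuatorMapDifferentialAt{\actuatorState}\) has dimension \(2\actuatorCount-6\) by rank--nullity. This is consistent with the earlier remark on the kernel dimension at locally fully actuated states.

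There is no genuine obstacle. The only point needing care is that regularity with \(\wrenchDimension=6\) requires \(2\actuatorCount\geq6\) for the set of regular states to be nonempty. This holds because the architecture assumes \(\actuatorCount\geq4\), and it makes the fiber dimension \(2\actuatorCount-6\geq2\) positive. If the statement is read vacuously at nonregular states, no further argument is needed.

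One could optionally add the remark that the Holocopter (\(\actuatorCount=4\)) gives fibers of dimension \(2\), and the Voliro (\(\actuatorCount=6\)) gives fibers of dimension \(6\). This remark is not needed for the proof.
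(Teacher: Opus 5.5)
Your proposal is correct and follows the paper's proof. Both substitute \(\configurationDimension=\actuatorCount\) and \(\wrenchDimension=6\) into the regular-fiber dimension \(\actuatorCount+\configurationDimension-\wrenchDimension\) from \Cref{prop:regular_fiber}; your hypothesis check and the remark that \(\actuatorCount\geq4\) gives \(2\actuatorCount-6\geq2\) are harmless additions the paper omits.
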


\begin{proof}
For the radially tilting architecture,
\(\configurationDimension=\actuatorCount\). The general regular-fiber
dimension
\(\actuatorCount+\configurationDimension-\fullWrenchDimension\)
therefore becomes \(2\actuatorCount-\fullWrenchDimension\).
\end{proof}

\subsection{The radially tilting quadrotor}

The Holocopter of~\cite{Ryll2015Holocopter} has
\(\actuatorCount=\holocopterRotorCount\) propeller units placed on four
orthogonal radial arms. Its geometry follows
\eqref{eq:radial_tilting_geometry}, with
\[
\radialAzimuth{\actuatorIndex}
=
(\actuatorIndex-1)\pi/2.
\]
The nominal thrust directions and propeller-group frame conventions
are those of~\cite{Ryll2015Holocopter}. Their substitution into
\eqref{eq:unit_wrench_curve} recovers the trigonometric force and
moment matrices of the original model.

By \Cref{cor:radial_tilting_fiber_dimension}, every regular
full-wrench fiber of the Holocopter is locally an embedded
submanifold of class \(C^{1}\) and dimension two.

\subsubsection{Relation to the Holocopter dynamic extension}

The original Holocopter dynamic extension uses the signed squared
propeller speeds and the tilt angles as internal actuator coordinates.
Let
\[
\varphi:
(\RealNumbers\setminus\{0\})^{\actuatorCount}
\to
(\RealNumbers\setminus\{0\})^{\actuatorCount}
\]
be defined componentwise by
\[
[\varphi(\propellerSpeedArray)]_{\actuatorIndex}
=
\signedSquare{\propellerSpeed{\actuatorIndex}},
\]
and define the product-coordinate transformation
\[
\Phi:
(\RealNumbers\setminus\{0\})^{\actuatorCount}
\times
\TiltTorus
\to
(\RealNumbers\setminus\{0\})^{\actuatorCount}
\times
\TiltTorus,
\quad
(\propellerSpeedArray,\tiltAngleArray)
\mapsto
\bigl(
\varphi(\propellerSpeedArray),
\tiltAngleArray
\bigr).
\]
On each fixed speed-sign orthant, \(\Phi\) is a diffeomorphism and its
differential is represented by
\[
J_{\Phi}(\actuatorState)
=
\diagOperator\!\left(
2\absoluteValue{\propellerSpeed{1}},
\ldots,
2\absoluteValue{\propellerSpeed{\actuatorCount}},
\identityMatrix{\actuatorCount}
\right).
\]

Let
\[
\widehat{\actuatorMap}:
(\RealNumbers\setminus\{0\})^{\actuatorCount}
\times
\TiltTorus
\to
\WrenchSpace
\]
denote the Holocopter actuator map expressed in signed-squared-speed
and tilt-angle coordinates. The physical actuator map in
\eqref{eq:physical_map} then satisfies
\begin{equation}
\actuatorMap
=
\widehat{\actuatorMap}
\circ
\Phi.
\label{eq:holocopter_map_factorization}
\end{equation}

\begin{proposition}[Coordinate relation to the Holocopter decoupling map]
\label{prop:holocopter_coordinate_relation}
Away from zero propeller speeds, the Jacobians of the physical and
signed-squared-speed actuator maps satisfy
\begin{equation}
\actuatorMapJacobian(\actuatorState)
=
J_{\widehat{\actuatorMap}}
\bigl(
\Phi(\actuatorState)
\bigr)
J_{\Phi}(\actuatorState).
\label{eq:holocopter_jacobian_relation}
\end{equation}
Consequently,
\[
\matrixRank
\actuatorMapJacobian(\actuatorState)
=
\matrixRank
J_{\widehat{\actuatorMap}}
\bigl(
\Phi(\actuatorState)
\bigr),
\]
and
\[
\kernel
\actuatorMapJacobian(\actuatorState)
=
J_{\Phi}(\actuatorState)^{-1}
\kernel
J_{\widehat{\actuatorMap}}
\bigl(
\Phi(\actuatorState)
\bigr).
\]
Thus, the physical actuator-map differential and the actuator block
of the original Holocopter decoupling map have the same rank, while
their task-preserving directions are related by the tangent
coordinate transformation \(J_{\Phi}(\actuatorState)\).
\end{proposition}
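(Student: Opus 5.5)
The plan is to derive \eqref{eq:holocopter_jacobian_relation} from the factorization \eqref{eq:holocopter_map_factorization} by the chain rule, and then to read off the rank and kernel statements from the invertibility of \(J_{\Phi}(\actuatorState)\).

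\textbf{Jacobian factorization.} Fix \(\actuatorState=(\propellerSpeedArray,\tiltAngleArray)\) with every \(\propellerSpeed{\actuatorIndex}\neq0\). Restrict attention to the open speed-sign orthant containing \(\propellerSpeedArray\). On that orthant \(\Phi\) is a smooth diffeomorphism, because \(s\mapsto s\absoluteValue{s}\) has derivative \(2\absoluteValue{s}>0\) there.

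I will check that \(\widehat{\actuatorMap}\) is differentiable in its coordinates. It is the sum
\[
\sum_{\actuatorIndex}\unitWrenchCurve{\actuatorIndex}(\tiltAngleArray)\,\hat{s}_{\actuatorIndex},
\]
where \(\hat{s}_{\actuatorIndex}\) denote the signed-squared-speed coordinates. This expression is linear in \(\hat{s}\) and smooth in \(\tiltAngleArray\).

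Next I will confirm \eqref{eq:holocopter_map_factorization} directly from \eqref{eq:physical_map}. The chain rule then gives
\[
\actuatorMapDifferentialAt{\actuatorState}=\differential\widehat{\actuatorMap}_{\Phi(\actuatorState)}\circ\differential\Phi_{\actuatorState}.
\]
In the product coordinates this is \eqref{eq:holocopter_jacobian_relation}, with the stated diagonal \(J_{\Phi}(\actuatorState)\).

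As a consistency check, I will compare blocks. The speed block gives column \(\actuatorIndex\) equal to \(\unitWrenchCurve{\actuatorIndex}\cdot2\absoluteValue{\propellerSpeed{\actuatorIndex}}\), which matches \eqref{eq:speed_jacobian_columns}. The tilt block is unchanged, which matches \eqref{eq:configuration_jacobian}.

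\textbf{Rank and kernel.} Since every \(\absoluteValue{\propellerSpeed{\actuatorIndex}}>0\), the matrix \(J_{\Phi}(\actuatorState)\) is square, diagonal and invertible. Right multiplication by an invertible matrix preserves the column space, so the two ranks coincide.

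For the kernel, take \(x\in\kernel\actuatorMapJacobian(\actuatorState)\). Then
\[
J_{\widehat{\actuatorMap}}\bigl(J_{\Phi}x\bigr)=0,
\]
so \(J_{\Phi}x\in\kernel J_{\widehat{\actuatorMap}}\). Conversely, for \(y\in\kernel J_{\widehat{\actuatorMap}}\), the vector \(J_{\Phi}^{-1}y\) lies in \(\kernel\actuatorMapJacobian\). This gives the stated kernel identity.

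\textbf{Interpretation.} The Holocopter decoupling map of \cite{Ryll2015Holocopter} has an actuator block that is exactly \(J_{\widehat{\actuatorMap}}\), with input directions in signed-squared-speed and tilt coordinates. The final sentence of the proposition is therefore a restatement of the rank and kernel identities: task-preserving directions are transported by \(J_{\Phi}(\actuatorState)\), which is the tangent map of \(\Phi\).

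\textbf{Main obstacle.} The only delicate point is not computational. It is the restriction away from zero speeds. At \(\propellerSpeed{\actuatorIndex}=0\), \(J_{\Phi}\) is singular and \(\Phi^{-1}\) fails to be differentiable, so both the kernel formula and the diffeomorphism claim would break down. I will therefore state explicitly that the argument is carried out orthant by orthant, and note that the identities are pointwise, so no global structure of \(\Phi\) is needed.
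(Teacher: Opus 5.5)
Your proposal is correct and follows the paper's proof. You differentiate the factorization \(\actuatorMap=\widehat{\actuatorMap}\circ\Phi\) by the chain rule, then use that \(J_{\Phi}(\actuatorState)\) is diagonal and invertible on each speed-sign orthant, so right multiplication by it preserves rank and transports the kernel as stated; your block-wise check against the speed and configuration Jacobians and your remark that the original decoupling matrix contains \(J_{\widehat{\actuatorMap}}\) composed with further nonsingular factors (rather than being exactly equal to it) do not affect this.
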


\begin{proof}
Differentiating the factorization
in~\eqref{eq:holocopter_map_factorization} gives
\eqref{eq:holocopter_jacobian_relation}. On each fixed speed-sign
orthant, \(J_{\Phi}(\actuatorState)\) is invertible because every
propeller speed is nonzero. Right multiplication by this invertible
matrix preserves rank and maps the kernel according to the stated
relation.
\end{proof}

The original dynamic extension differentiates the rigid-body output
until the actuator-coordinate rates enter explicitly. Its decoupling
matrix therefore contains
\(J_{\widehat{\actuatorMap}}\) composed with the nonsingular
rigid-body and output-coordinate maps introduced by the selected
output derivative~\cite{Ryll2015Holocopter}. These additional factors
do not change surjectivity. Hence, away from zero propeller speeds,
the established full-row-rank condition of the Holocopter decoupling
matrix is equivalent to surjectivity of
\(\actuatorMapDifferentialAt{\actuatorState}\).

\begin{corollary}[Geometric interpretation of the Holocopter rank condition]
\label{cor:holocopter_promptness}
Under the ideal Holocopter model and away from zero propeller speeds,
the established surjectivity of the original decoupling
map~\cite{Ryll2015Holocopter} implies that the physical actuator map
is locally fully actuated. By
\Cref{thm:promptness_rank}, its aerodynamic promptness is positive for
every Riemannian actuator-rate metric.

Moreover, because
\(\configurationDimension=\actuatorCount=\holocopterRotorCount\),
the corresponding regular full-wrench fiber has dimension two by
\Cref{cor:radial_tilting_fiber_dimension}. The two-dimensional
null space used by the original controller is the coordinate
representation of the tangent space to this wrench fiber.
\end{corollary}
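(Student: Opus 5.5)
The plan is to chain \Cref{prop:holocopter_coordinate_relation} with \Cref{thm:promptness_rank} and \Cref{cor:radial_tilting_fiber_dimension}, keeping track of the nonsingular factors that separate the original decoupling matrix from \(J_{\widehat{\actuatorMap}}\).

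\textbf{Step 1: strip the extra factors from the decoupling matrix.} By the discussion following \Cref{prop:holocopter_coordinate_relation}, the decoupling matrix of~\cite{Ryll2015Holocopter} factors as a product \(M_{1}\,J_{\widehat{\actuatorMap}}(\Phi(\actuatorState))\,M_{2}\). Here \(M_{1}\) collects the nonsingular rigid-body and output-coordinate maps. The factor \(M_{2}\) is whatever selection or identity acts on the actuator-coordinate rates entering at the chosen derivative order.

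I would argue that full row rank of this product forces full row rank of \(J_{\widehat{\actuatorMap}}\). Two facts do this. First, left multiplication by the invertible \(M_{1}\) preserves rank. Second, \(\matrixRank(XM_{2})\leq\matrixRank X\). Hence surjectivity of the original decoupling map implies \(\matrixRank J_{\widehat{\actuatorMap}}(\Phi(\actuatorState))=6\).

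\textbf{Step 2: pass to the physical map and to promptness.} Away from zero speeds, the rank identity of \Cref{prop:holocopter_coordinate_relation} gives \(\matrixRank\actuatorMapJacobian(\actuatorState)=6=\wrenchDimension\). So \(\actuatorMapDifferentialAt{\actuatorState}\) is surjective, which is local full actuation, and \(\actuatorState\in\RegularActuatorSet\).

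\Cref{thm:promptness_rank} holds for every positive-definite metric \(\metricMatrix\). It therefore yields \(\aerodynamicPromptness(\actuatorState)>0\) uniformly over all Riemannian actuator-rate metrics. The only inputs are the positive definiteness of \(\metricMatrix^{-1}\) and the rank equality between \(\taskRateAuthorityMatrix\) and \(\actuatorMapJacobian\).

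\textbf{Step 3: fiber dimension and identification of the null space.} Since \(\actuatorState\) is regular, \Cref{prop:regular_fiber} and \Cref{cor:radial_tilting_fiber_dimension} with \(\actuatorCount=4\) give a local \(C^{1}\) fiber of dimension \(2\cdot4-6=2\), with \(T_{\actuatorState}\fiberAt{\actuatorMap(\actuatorState)}=\kernel\actuatorMapDifferentialAt{\actuatorState}\).

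The controller in~\cite{Ryll2015Holocopter} works with \(\kernel J_{\widehat{\actuatorMap}}\) in \((\varphi(\propellerSpeedArray),\tiltAngleArray)\) coordinates. By \Cref{prop:holocopter_coordinate_relation},
\[
\kernel\actuatorMapJacobian(\actuatorState)=J_{\Phi}(\actuatorState)^{-1}\kernel J_{\widehat{\actuatorMap}}\bigl(\Phi(\actuatorState)\bigr).
\]
Since \(\Phi\) is a diffeomorphism on each speed-sign orthant, the original null space is exactly the same tangent subspace \(\kernel\differential\actuatorMap_{\actuatorState}\), expressed in the transformed chart. That subspace is the tangent space to the fiber, and it is two-dimensional.

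\textbf{Main obstacle.} The delicate point is Step 1: what "the original decoupling map" is, and why only its actuator block matters. If the selected output derivative also brings in rigid-body state terms, these enter only through the drift and not the decoupling matrix. So I would state explicitly that the decoupling matrix equals a nonsingular left factor times \(J_{\widehat{\actuatorMap}}\), possibly restricted to input directions. Given that, only the implication "decoupling surjective \(\Rightarrow\) \(J_{\widehat{\actuatorMap}}\) surjective" is needed, and it holds even if \(M_{2}\) is not invertible.
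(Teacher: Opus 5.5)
Your proposal is correct and follows essentially the paper's own argument. The chain is the same: strip the nonsingular rigid-body and output-coordinate factors from the Holocopter decoupling matrix, then transfer full row rank to \(\actuatorMapJacobian\) through the invertibility of \(J_{\Phi}\) in \Cref{prop:holocopter_coordinate_relation}, invoke \Cref{thm:promptness_rank} for every metric, and read off the two-dimensional fiber and its tangent space from \Cref{cor:radial_tilting_fiber_dimension} and the kernel relation. One small caveat: allowing a possibly non-invertible \(M_{2}\) only helps in Step 1, because identifying the controller's null space with \(\kernel J_{\widehat{\actuatorMap}}\) in Step 3 needs \(M_{2}\) invertible; this holds for the Holocopter, and is implicit in the paper, since the extended inputs are exactly the rates of the signed squared speeds and tilt angles, so \(M_{2}\) is the identity.
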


When the direct actuator-rate-input model applies, the input map is
the identity and the input metric coincides with the actuator-rate
metric. The input-induced authority map of
\Cref{sec:dfl} then reduces to the aerodynamic-promptness authority
map. The original Holocopter rank condition determines whether every
wrench-rate direction is locally available, whereas aerodynamic
promptness and its spectrum quantify the volume, directional
distribution, and conditioning of that authority. Task-preserving
motion along the two-dimensional wrench fiber can therefore be used
to shape these quantities without changing the current wrench.

\subsection{The radially tilting hexarotor}

The original Voliro platform of~\cite{Kamel2018Voliro} has
\(\actuatorCount=\voliroRotorCount\) propeller units uniformly
distributed on a circle. Each propeller tilts independently about the
corresponding radial arm. By
\Cref{cor:radial_tilting_fiber_dimension}, every regular full-wrench
fiber of Voliro is locally an embedded submanifold of class \(C^{1}\)
and dimension six.

Thus, at a regular full-wrench realization, Voliro retains six
task-preserving internal actuator directions, whereas the Holocopter
retains two. These additional internal directions are available for
shaping promptness, actuator margins, and proximity to mechanical
limits, and for incorporating a separate propulsive-power objective
when an appropriate power model is supplied.

\subsubsection{Relation to the original Voliro allocation}

On the original nonnegative-speed operating set, the Voliro
allocation introduces, for each propeller, the vertical and lateral
force components
\[
\voliroVerticalForce{\actuatorIndex}
=
\thrustCoefficient{\actuatorIndex}
\signedSquare{\propellerSpeed{\actuatorIndex}}
\cos\tiltAngle{\actuatorIndex},
\quad
\voliroLateralForce{\actuatorIndex}
=
\thrustCoefficient{\actuatorIndex}
\signedSquare{\propellerSpeed{\actuatorIndex}}
\sin\tiltAngle{\actuatorIndex}.
\]
Stacking these components defines
\(\voliroDecomposedForceArray\in
\RealNumbers^{2\actuatorCount}\). The body wrench is represented by
the static linear map
\[
\voliroStaticAllocationMap:
\RealNumbers^{2\actuatorCount}
\to
\WrenchSpace,
\]
whose matrix in the selected decomposed-force and wrench coordinates
is \(\voliroStaticAllocationMatrix\). Hence,
\(\bodyWrench
=\voliroStaticAllocationMap(\voliroDecomposedForceArray)\), represented
in coordinates by
\(\bodyWrench
=\voliroStaticAllocationMatrix\voliroDecomposedForceArray\).

The original allocator selects the Moore--Penrose solution of this
linear equation and reconstructs each propeller speed and tilt angle
from the corresponding pair of vertical and lateral force
components~\cite{Kamel2018Voliro}.

\begin{proposition}[Relation between decomposed-force and physical actuator maps]
\label{prop:voliro_coordinate_relation}
Let
\[
\voliroDecompositionMap:
\PropellerSpeedSpace\times\TiltTorus
\to
\RealNumbers^{2\actuatorCount}
\]
be the continuously differentiable map that associates each physical
Voliro actuator state with the corresponding stacked array of
vertical and lateral propeller-force components. Then the Voliro
body-wrench map factors as
\[
\actuatorMap
=
\voliroStaticAllocationMap
\circ
\voliroDecompositionMap,
\]
and its differential satisfies
\[
\actuatorMapDifferentialAt{\actuatorState}
=
\voliroStaticAllocationMap
\circ
\differential\voliroDecompositionMap_{\actuatorState}.
\]
Consequently, the static allocation map and the physical
actuator-map differential describe different relations: the former
maps decomposed propeller-force components to body wrench, whereas the
latter maps physical actuator-state velocities to body-wrench rate.
\end{proposition}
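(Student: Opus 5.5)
The plan is to check the factorization pointwise from the definitions, and then obtain the differential identity from the chain rule, using that the static allocation map is linear.

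\textbf{Step 1: the decomposition map.} I will make \(\voliroDecompositionMap\) explicit in components. For each \(\actuatorIndex\in\actuatorIndexSet\), the pair of entries associated with propeller \(\actuatorIndex\) is
\[
\bigl(\thrustCoefficient{\actuatorIndex}\signedSquare{\propellerSpeed{\actuatorIndex}}\cos\tiltAngle{\actuatorIndex},\;\thrustCoefficient{\actuatorIndex}\signedSquare{\propellerSpeed{\actuatorIndex}}\sin\tiltAngle{\actuatorIndex}\bigr).
\]
Each entry is the product of the \(C^{1}\) function \(\SignedSquareMap\) of \(\propellerSpeed{\actuatorIndex}\) and a smooth function of \(\tiltAngle{\actuatorIndex}\) on \(\TiltTorus\). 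Hence \(\voliroDecompositionMap\) is continuously differentiable, with the same regularity argument as in \Cref{lem:differential}.

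\textbf{Step 2: the factorization.} The radially tilting direction map is \(\propellerDirectionMap{\actuatorIndex}(\tiltAngleArray)=\rotationAboutAxis{\tiltAxis{\actuatorIndex}}{\tiltAngle{\actuatorIndex}}\nominalThrustDirection{\actuatorIndex}\). Because the tilt axis is radial and the nominal direction is orthogonal to it (the Voliro convention), Rodrigues' formula gives
\[
\propellerDirectionMap{\actuatorIndex}=\cos\tiltAngle{\actuatorIndex}\,\nominalThrustDirection{\actuatorIndex}+\sin\tiltAngle{\actuatorIndex}\,\tiltAxis{\actuatorIndex}\times\nominalThrustDirection{\actuatorIndex}.
\]
Both basis vectors here are constant. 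I substitute this into \eqref{eq:unit_wrench_curve} and multiply by \(\signedSquare{\propellerSpeed{\actuatorIndex}}\). Propeller \(\actuatorIndex\)'s contribution to \eqref{eq:physical_map} then becomes
\[
\voliroVerticalForce{\actuatorIndex}\,\bm{c}_{\actuatorIndex}+\voliroLateralForce{\actuatorIndex}\,\bm{c}_{\actuatorIndex}',
\]
where \(\bm{c}_{\actuatorIndex}\) and \(\bm{c}_{\actuatorIndex}'\) are constant vectors in \(\RealNumbers^{\fullWrenchDimension}\) built from \(\nominalThrustDirection{\actuatorIndex}\), \(\propellerPosition{\actuatorIndex}\), \(\spinDirection{\actuatorIndex}\), and the coefficient ratio \(\dragCoefficient{\actuatorIndex}/\thrustCoefficient{\actuatorIndex}\). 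After selecting the task wrench coordinates, these vectors are precisely the columns of \(\voliroStaticAllocationMatrix\).

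For the original allocation this is a bookkeeping identification: \(\voliroStaticAllocationMatrix\) is by construction the matrix expressing the wrench linearly in the decomposed forces. So \(\actuatorMap(\actuatorState)=\voliroStaticAllocationMap(\voliroDecompositionMap(\actuatorState))\) holds for every \(\actuatorState\). The identity is polynomial in \(\signedSquare{\propellerSpeed{\actuatorIndex}}\), so it extends beyond the nonnegative-speed operating set.

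\textbf{Step 3: the differential and the closing interpretation.} I apply the chain rule to the composition. The map \(\voliroStaticAllocationMap\) is linear between vector spaces, so its differential at every point is \(\voliroStaticAllocationMap\) itself under the canonical identification of tangent spaces. This gives \(\actuatorMapDifferentialAt{\actuatorState}=\voliroStaticAllocationMap\circ\differential\voliroDecompositionMap_{\actuatorState}\).

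The final claim is interpretive. It follows by comparing domains and codomains: \(\voliroStaticAllocationMap\) acts on \(\RealNumbers^{2\actuatorCount}\), whereas \(\actuatorMapDifferentialAt{\actuatorState}\) acts on \(\ActuatorTangentSpaceAt{\actuatorState}\). The state-dependent factor \(\differential\voliroDecompositionMap_{\actuatorState}\) carries the entries \(2\absoluteValue{\propellerSpeed{\actuatorIndex}}\) and \(\signedSquare{\propellerSpeed{\actuatorIndex}}\) times trigonometric terms, and it distinguishes the two maps.

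\textbf{Expected obstacle.} The main difficulty is Step 2, specifically fixing conventions. Rodrigues' formula must yield exactly the vertical and lateral components used in \cite{Kamel2018Voliro}, which requires \(\nominalThrustDirection{\actuatorIndex}\perp\tiltAxis{\actuatorIndex}\). I would state this as part of the Voliro convention rather than derive it.
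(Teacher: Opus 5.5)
Your proposal is correct and follows essentially the same route as the paper. The paper simply asserts that the factorization follows from the definitions of the vertical and lateral force components and their linear contribution to the body wrench, then applies the chain rule; your Rodrigues-formula computation, under the stated convention that each nominal thrust direction is orthogonal to its radial tilt axis, fills in detail the paper leaves implicit and correctly identifies that this orthogonality is what lets the two components per propeller suffice.
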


\begin{proof}
The factorization follows from the definitions of the vertical and
lateral force components and from their linear contributions to the
body wrench. Differentiating the factorization gives the stated
relation by the chain rule.
\end{proof}

\begin{remark}[Relation to the Voliro minimum-norm allocation]
The squared Euclidean norm of the decomposed-force array used in the
original Voliro allocation is proportional to
\[
\sum_{\actuatorIndex\in\actuatorIndexSet}
\propellerSpeed{\actuatorIndex}^{4}.
\]
The Moore--Penrose allocation therefore minimizes a
decomposed-force norm, which is a propeller-loading surrogate rather
than an exact electrical-power objective. This objective concerns the
current wrench realization, whereas aerodynamic promptness quantifies
the metric-normalized ability to change that wrench.

The six-dimensional Voliro wrench fiber permits these objectives to
be studied jointly. In particular, task-preserving actuator motion
can modify the decomposed-force objective and full-wrench reaction
capability while retaining the same instantaneous body wrench.
\end{remark}

\begin{remark}[Tilt-rate limitations]
The original Voliro allocation recovers propeller speeds and tilt
angles from independently allocated vertical and lateral force
components. It does not directly account for the slower dynamics of
the orientation mechanisms relative to the propeller motors, a
difference identified as a practical limitation of the original
controller~\cite{Kamel2018Voliro}.

The actuator-rate metric incorporates this difference into the
differential wrench authority. Assigning different characteristic
scales to propeller acceleration and tilt rate prevents rapidly
changing propeller forces and comparatively slow tilt motion from
being treated as equally available actuator-state velocities.
\end{remark}

\section{Conclusion}
\label{sec:conclusion}

This work formulated tilting-propeller multirotor actuation as a nonlinear map from actual signed propeller speeds and a general propeller-orientation mechanism configuration to body wrench. A smooth Riemannian metric on the actuator manifold compares propeller accelerations and orientation-mechanism rates, and its musical isomorphism converts the dual actuator-map differential into physically normalized actuator-state velocities. Composition with the actuator-map differential yields a task-rate authority map from wrench covectors to wrench-rate vectors. Its coordinate matrix is positive semidefinite at every actuator state and positive definite exactly at regular points; the square root of its determinant defines aerodynamic promptness in fixed Euclidean wrench coordinates.

The same geometry separates task realization from task-preserving internal motion. At regular states, the actuator-map differential restricts to an isomorphism from the metric-horizontal space to the wrench tangent space. Its inverse produces the unique minimum-metric-norm actuator-state velocity for a prescribed wrench rate, while the complementary orthogonal projector selects motions tangent to the current wrench fiber. Projected gradient motion can therefore regulate promptness without altering the instantaneous wrench.

For general actuator dynamics, the intrinsic decoupling map is the actuator-map differential composed with the actuator input map. The associated input metric induces a task-rate authority map whose positivity is equivalent to surjectivity of the decoupling map, recovering the dynamic-feedback-linearization rank condition while additionally quantifying directional authority and conditioning. The independently tilting, radially arranged Holocopter and Voliro illustrate how the same framework encompasses architectures with two- and six-dimensional regular full-wrench fibers, respectively, and clarifies the relation between their historical allocation constructions and the differential geometry of the physical actuator map.

The resulting framework identifies wrench-rate readiness as a task-preserving property of redundant tilting actuation. It provides the geometric objects required to compare actuator states that generate the same wrench, quantify their complete differential authority, and regulate that authority without disturbing the current task.

\bibliographystyle{IEEEtran}
\bibliography{refs}

\end{document}